\newtheorem{definition}{Definition}
\newtheorem{proposition}{Proposition}
\newtheorem{theorem}{Theorem}
\newtheorem{lemma}{Lemma}
\newtheorem{corollary}{Corollary}
\newenvironment{proof}[1][Proof]{\textbf{#1.} }{\ \rule{0.5em}{0.5em}}
\newcommand{\be}{\begin{eqnarray}}
\newcommand{\ee}{\end{eqnarray}}
\def\({\left(}
\def\){\right)}
\def\[{\left[}
\def\]{\right]}
\def\C{\mathbb{C}}
\def\R{\mathbb{R}}
\def\vect#1{\mbox{\boldmath $#1$}}
\newcommand{\e}{\mathrm{e}}
\newcommand{\im}{\mathrm{i}}
\newcommand{\dd}{\mathrm{d}}
\newcommand{\braket}[1]{\left\langle #1 \right\rangle}
\newcommand{\bra}[1]{\langle #1 |}
\newcommand{\ket}[1]{| #1 \rangle}
\newcommand{\sla}[1]{\rlap{\kern .15em /}#1}
\newcommand{\ot}{\otimes}
\newcommand{\Bot}{\bigotimes}
\def\ann#1#2#3{{\it Ann.\ Phys.}\ {\bf {#1}} ({#2}) #3}
\def\lmp#1#2#3{{\it Lett.\ Math.\ Phys.} {\bf {#1}} ({#2}) #3}
\def\njp#1#2#3{{\it New\ J.\ Phys.}\ {\bf {#1}} ({#2}) #3}
\def\pr#1#2#3{{\it Phys.\ Rev.}\ {\bf {#1}} ({#2}) #3}
\def\prA#1#2#3{{\it Phys.\ Rev.}\ {\bf A{#1}} ({#2}) #3}
\def\prl#1#2#3{{\it Phys.\ Rev.\ Lett.}\ {\bf #1} ({#2}) #3}
\begin{document}
\title{Separability of $N$-particle Fermionic States for Arbitrary Partitions}

\author{
Tsubasa Ichikawa$^1$, Toshihiko Sasaki$^2$ and Izumi Tsutsui$^2$
}

\affiliation{
$^1$Research Center for Quantum Computing, Interdisciplinary Graduate School of Science and Engineering, 
 Kinki University, 3-4-1 Kowakae, 
Higashi-Osaka, Osaka 577-8502, Japan\\
$^2$Theory Center, 
Institute of Particle and Nuclear Studies, High Energy Accelerator Research Organization (KEK), 1-1 Oho, Tsukuba, Ibaraki 305-0801, Japan
}

\begin{abstract}
We present a criterion of separability for arbitrary $s$ partitions of $N$-particle fermionic  pure states.  We
show that, despite the superficial non-factorizability due to the antisymmetry required by the indistinguishability of the particles, 
the states which meet our criterion have factorizable correlations for a class of observables which are specified consistently with the states.
The separable states and the associated class of observables share an orthogonal structure, whose non-uniqueness is found to be intrinsic to the 
multi-partite separability and leads to the non-transitivity in the factorizability in general.  Our result generalizes 
the previous result obtained by Ghirardi {\it et.\,al.} [J.~Stat.~Phys.~{\bf 108} (2002) 49] for the $s=2$ and $s=N$ case.
\end{abstract}
\pacs{03.65.-w, 03.65.Ta, 03.65.Ud.}

\maketitle
\section{Introduction}

One of the distinguished features of quantum theory is the existence of indistinguishable particles.  
The indistinguishability of particles, which leads to 
the symmetry or antisymmetry of states depending on the particle statistics,  
demystifies miscellaneous physical phenomena and  properties,  
including black-body radiation, the electronic shell structure of atoms, Bose-Einstein condensation, superconductivity, 
and Hanbury Brown and Twiss effect.
Meanwhile, the indistinguishability mystifies us as well, especially in its perplexing 
consequence in quantum correlation or entanglement.   For example, the cluster decomposition property \cite{WC63, Peres93, Weinberg95}
of quantum field theory ensures that correlation functions of local observables are factorizable for indistinguishable particles, 
despite that the states of the particles are necessarily intertwined with each other due to the (anti-)symmetry prescribed.  
The puzzling aspect of indistinguishability of particles in quantum theory
derives from the fact that the entire Hilbert space is not just the tensor product of the constituent Hilbert spaces of individual particles; rather it is the (anti-)symmetrized subspace of it.   Consequently, the conventional criterion of  entanglement for distinguishable particles is no longer valid for indistinguishable particles.  

A definite criterion of entanglement for indistinguishable particles is indispensable also for quantum information science, where characterization of entangled  multi-partite states forms a key element to achieve quantum computation and communication effectively, or any other operations designed.   In view of this,  in recent years
the criterion of entanglement for indistinguishable particles has been studied intensively by several groups \cite{SCKLL01, PY01, LZLL01, ESBL02, GMW02, GM03, WV03, GM04}.    Among them is an algorithmic approach \cite{SCKLL01, LZLL01, ESBL02}, where (anti-)symmetric $N$-particle states separable into $N$ single-particle states are examined based on the Slater rank which is defined from the orthonormal decomposition of the states, generalizing the Schmidt numbers considered for distinguishable particles \cite{Peres93, NC00}.   A related analysis in terms of the von Neumann entropy has also been made for the bipartition separability criterion \cite{PY01, WV03, GM04}.
Although this approach has an advantage in the operational directness, enabling one to construct useful operators such as the entanglement witness, it leaves the extension to the general case of separability into arbitrary subsystems highly non-trivial.
Another approach, which has been proposed by Ghirardi {\it et.\,al} in \cite{GM04, GMW02, GM03}, characterizes separable states based on the possession of \lq complete set of the properties\rq\ associated with the subsystems into which they are separable.  Their criterion of  bipartition separability (an $N$-particle system separable into two subsystems) and full separability (an $N$-particle system separable into $N$ subsystems), given by conditions using projection operators, are satisfied by (anti-)symmetrized direct product states.  For fermionic systems, 
the separable (non-entangled) states determined by these two approaches coincide completely in the case of $N = 2$ bipartition separability \cite{GM04}, which suggests that the two approaches are consistent at least for the restricted cases of fermionic states,  even though their manipulations are quite different.  

In this article, we present a general framework of separability for the case of indistinguishable fermionic particles and furnish a criterion capable of treating 
arbitrary separability, {\it i.e.}, an $N$-particle system separable into an arbitrary number $s$ of subsystems.  The states which meet our criterion are found to be anti-symmetrized states of product states belonging to subsystems whose state spaces are mutually orthogonal, and for the two extreme cases, $s = 2$ and $s = N$, they coincide with those found in \cite{GMW02}. 
It is confirmed that these states have factorizable correlations (as well as weak values \cite{AAV88, AR05, YYKI09}) for a certain class of observables in a way which ensures the cluster decomposition property when the fermions are localized remotely from each other.  
Our argument highlights the important fact that separability of states is defined only with reference to a class of observables admitting an orthogonal structure.  In the actual experimental setups, such a class may correspond to measurements which allow for unambiguous specification of the subsystems in the separation.  

This paper is organized as follows. 
Sec.~II illustrates our motivation by the simple $N = 2$ case from physical grounds, where we discuss why indistinguishability of particles calls for an appropriate revision on the criterion of separability of states.   The novel aspects pertinent to fermionic states and their consequences in the required revision
will be mentioned for our direct generalization of later sections.  
In Sec.~III, we introduce a number of basic tools that we use for the system of general $N$ particles partitioned into arbitrary subsystems.  Two types of observables appropriate for fermionic states, one considered in \cite{GMW02} and the other considered here, are shown to be equivalent at the level of expectation values.   
Sec.~IV contains our main argument on arbitrary
separability.   After introducing an orthogonal structure with which a particular set of states and a class of observables are defined, we present our criterion of arbitrary separability and prove that these sets of states indeed meet the criterion and possess factorizable correlations for the corresponding observables.  
Some of the notable points associated with the orthogonal structure,  as well as the observation on the cluster decomposition, are mentioned in Sec.~V.  
Sec.~VI is devoted to the conclusions and discussions.

\section{Prolegomenon}
 \setcounter{equation}{0}  
 
In this section, we outline our motivation of the present work by discussing the $N = 2$ case referring to the relation between separable states, characterized by factorization of correlations, and the local realism signified, {\it e.g.},  by the Bell inequality \cite{Bellone, CHSH,ADR82,WJSWZ98}.
We shall see that for indistinguishable particles the condition of separability must be revised properly in order to maintain the relation of the two, and this revision will be generalized in later sections when we deal with systems of $N \ge 2$ fermions allowing for arbitrary separations into subsystems.

First, we consider the case of distinguishable particles.  Let $\ket{\Psi} \in {\cal H}_1\ot{\cal H}_2$ be a vector describing the state of the composite system of two particles $i=1,2$ possessing the spaces ${\cal H}_i$ for individual states.    Given the observables $O_i$ for the particles $i=1,2$, the combined observable for the composite system reads $O_1\ot O_2$, which admits the trivial identity,
\be
 O_1\otimes O_2 = \(O_1\otimes \mathbbm{1}_2\)\(\mathbbm{1}_1\otimes O_2\)
 \label{disdec}
\ee
with $\mathbbm{1}_i$ being the identity operator in ${\cal H}_i$.    Corresponding to this, one may find that, for some $\ket{\Psi}$ and $O_i$, the expectation value also factorizes:
\be
\bra{\Psi}O_1\otimes O_2\ket{\Psi}=
   \bra{\Psi}O_1\otimes \mathbbm{1}_2\ket{\Psi}
    \bra{\Psi}\mathbbm{1}_1\otimes O_2\ket{\Psi}.
    \label{inddec}
\ee
As shown in \cite{GMW02}, for pure states of distinguishable particles, the factorization (\ref{inddec}) occurs for any $O_i$ if and only if the state $\ket{\Psi}$ is a product state 
\be
\ket{\Psi} = \ket{\psi_1}_1\ket{\psi_2}_2,
 \label{prodtwo}
\ee
where $\ket{\psi_i}_i \in {\cal H}_i$ are arbitrary states of the individual particles.
This implies that we can use the factorization of correlations (\ref{inddec}) as a criterion to examine the non-entanglement of states of composite systems, if the latter is characterized by the form of product states.

To see the connection between separability and local realism, let us consider the case where  $O_1$, $O_2$ correspond to dichotomic observables $A, B(=\pm1)$ in the constituent particle systems which are remotely separated from each other (see Fig.\ref{bell01}).   
Now, in local realistic theory (LRT) \cite{Bellone}, it is assumed that the outcomes of the measurement is determined by hidden variables collectively denoted by $\lambda$, and that locality of the two particles and the measurement setup ensures that the outcomes are independent of the choice of the local measurement parameters $\vect{a}, \vect{b}$ and also of the outcomes themselves, {\it i.e.}, they are 
determined as $A(\vect{a}, \lambda)$ and $B(\vect{b}, \lambda)$, respectively.    From these, 
the correlations of the measurement outcomes are given by
\be
C_L(\vect{a},\vect{b})=\int\dd\lambda\, \rho(\lambda)A(\vect{a},\lambda)B(\vect{b},\lambda),
 \label{lrtcol}
\ee
under a given probability distribution $\rho(\lambda)$ of hidden variables specifying the state of the system.
We then have the celebrated Bell (CHSH) inequality \cite{CHSH},
\be
|C_L(\vect{a},\vect{b})+C_L(\vect{a},\vect{b}^\prime)+C_L(\vect{a}^\prime,\vect{b})-C_L(\vect{a}^\prime,\vect{b}^\prime)|\leq 2,
 \label{Belleq}
\ee
which holds for any choice of parameters $\vect{a}, \vect{a}^\prime, \vect{b}$ and $\vect{b}^\prime$.  We mention that the Bell inequality (\ref{Belleq}) holds even for stochastic cases \cite{Belltwo} where the dichotomic observables $A, B$ are replaced by \lq averages\rq\ $\bar A, \bar B$ for which we have $\vert \bar A\vert \le 1$ and $\vert \bar B\vert \le 1$.

On the other hand, in quantum mechanics, the dichotomic observable can be represented in the state space ${\cal H}_i = \C^2$ by the spin operators 
$O_i = \sigma_i(\vect{a}):= \sigma_i\cdot \vect{a}$ where $\sigma_i = (\sigma_i^x,\sigma_i^y,\sigma_i^z)$ are Pauli matrices.  Here, the outcomes of the spin measurement in the direction $\vect{a}$ are given by the dichotomic values $\pm1$ as required.   The correlations of the outcomes then read
\be
C_Q(\vect{a},\vect{b})= \bra{\Psi}O_1\otimes O_2\ket{\Psi} = \bra{\Psi}  \sigma_1(\vect{a})\ot\sigma_2(\vect{b}) \ket{\Psi}
 \label{qmcol}
\ee
for which we have the Cirel\rq son inequality \cite{Cirelson80},
 \be
|C_Q(\vect{a},\vect{b})+C_Q(\vect{a},\vect{b}^\prime)+C_Q(\vect{a}^\prime,\vect{b})-C_Q(\vect{a}^\prime,\vect{b}^\prime)|\leq2\sqrt{2}.
 \label{Cir}
 \ee
The upper bound of the Cirel\rq son inequality (\ref{Cir}) is attained by the singlet state $\ket{\Psi} = \ket{{\rm Singlet}}_{12}$ given by
\be
\label{singlet0}
\ket{{\rm Singlet}}_{12}:=\frac{1}{\sqrt{2}}\(\ket{+}_1\ket{-}_2-\ket{-}_1\ket{+}_2\)
\ee
with the eigenstates $\sigma_i^z \ket{\pm}_i = \pm \ket{\pm}_i$ under 
a suitable choice of the parameters (configurations of the measurement axes).  This shows that some quantum states, though not all, do realize correlations that cannot be achieved by LRT.  

If, in particular,  the state $\ket{\Psi}$ is a product state (\ref{prodtwo}), the factorization (\ref{inddec}) yields 
\be
C_Q(\vect{a},\vect{b})=
 \bra{\Psi} \sigma_1(\vect{a})\ot \mathbbm{1}_2\ket{\Psi}
 \bra{\Psi} \mathbbm{1}_1\ot \sigma_2(\vect{b})\ket{\Psi}
 =
{}_1\bra{\psi_1} \sigma_1(\vect{a})\ket{\psi_1}_1
{}_2\bra{\psi_2} \sigma_2(\vect{b})\ket{\psi_2}_2.
 \label{decqubits}
\ee
We then observe that, if we put $\bar A(\vect{a}) = {}_1\bra{\psi_1} \sigma_1(\vect{a})\ket{\psi_1}_1$, $\bar B(\vect{b}) = {}_2\bra{\psi_2} \sigma_2(\vect{b})\ket{\psi_2}_2$ we have $\vert \bar A(\vect{a}) \vert \le 1$ and $\vert \bar A(\vect{a}) \vert \le 1$, and hence the correlation $C_Q(\vect{a},\vect{b})$ in (\ref{decqubits}) coincides formally 
with the stochastic version of $C_L(\vect{a},\vect{b})$ in (\ref{lrtcol}) when the observables are independent of the hidden variables.  Thus, for those factorized states, the upper bound in the inequality (\ref{Cir}) is found to be $2$, which implies that the correlations of factorizable states can always be emulated by LRT.  To sum up, we find that 
for distinguishable particles in pure states, the three criteria (factorization of correlations, direct-productness of state, and emulation by LRT) are all consistent.

\begin{figure}[t]
 \begin{center}
  \includegraphics[height=3cm]{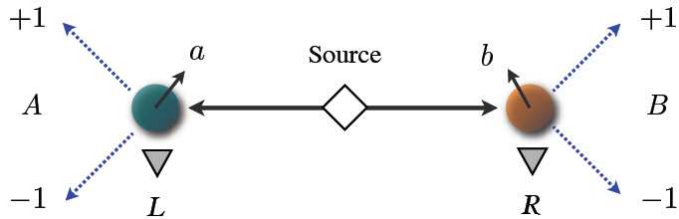}
  \caption{Schematic diagram of experiments to test the Bell inequality.  In a system of distinguishable particles 1 and 2, dichotomic observables $A, B$ are measured along the axes $\vect{a}, \vect{b}$ represented by the operators $ \sigma_1(\vect{a})$, $ \sigma_2(\vect{b})$, respectively.  
Accompanied to these,  apparatuses to measure the positions $L, R$ of the particles represented by the observables $\ket{L}_1{}_1\bra{L}, \ket{R}_2{}_2\bra{R}$ are also installed.  If the particles are indistinguishable, the entire observables must be made invariant under the exchange of the particles in accordance with states which respect the (anti-)symmetry required by the statistics of the particles.}
\label{bell01}
\end{center}
\end{figure}

At this point we note that in the actual measurement we identify the measurement outcomes $A, B$ with those associated with the particles $i = 1, 2$, respectively.   For this, it is customary to consider the spatial degrees of freedom of the particles in addition to the spins, that is, we introduce in each of the constituent systems a state  $\ket{L}_i$ localized around the measurement apparatus on the left and similarly a state $\ket{R}_i$ localized around the measurement apparatus on the right, where the locality implies ${}_i\bra L {R}\rangle_i = 0$.  
With these, the observable and the singlet state of the composite system are given, more precisely, by
\be
 \sigma_1(\vect{a})\ket{L}_1{}_1\bra{L}\otimes \sigma_2(\vect{b})\ket{R}_2{}_2\bra{R}
 \qquad
 {\rm and}
 \qquad
\ket{{\rm Singlet}}_{12}\ket{L}_1\ket{R}_2.
\label{disrep}
\ee
{}Fortunately,  thanks to the direct product structure of the states and the observables (\ref{disrep}), the introduction of the spatial degrees of freedom does not affect the outcomes of correlations, and our argument goes through without invoking them for distinguishable particles.

However, for indistinguishable particles, this is no longer the case.  To see this, we first note that due to the indistinguishability the observable 
for the composite system must also be made invariant under the exchange of particles.  
Thus, the observable corresponding to (\ref{disrep}) should be replaced by 
\begin{equation}
 \sigma_1(\vect{a})\ket{L}_1{}_1\bra{L}\otimes \sigma_2(\vect{b})\ket{R}_2{}_2\bra{R}
  \rightarrow
   \sigma_1(\vect{a})\ket{L}_1{}_1\bra{L}\otimes \sigma_2(\vect{b})\ket{R}_2{}_2\bra{R} 
   + \sigma_1(\vect{b})\ket{R}\bra{R}\otimes \sigma_2(\vect{a})\ket{L}\bra{L}.
  \label{extend1}
\end{equation}
In the case of fermions, one can show (see Proposition 2) that
(\ref{extend1}) is equivalent to 
\begin{equation}
 \sigma_1(\vect{a})\ket{L}_1{}_1\bra{L}\otimes \sigma_2(\vect{b})\ket{R}_2{}_2\bra{R}
  \rightarrow
   A  (\sigma_1(\vect{a})\ket{L}_1{}_1\bra{L}\otimes \sigma_2(\vect{b})\ket{R}_2{}_2\bra{R}) A,
  \label{extend2}
\end{equation}
where $A$ is the (rescaled) anti-symmetrizer
\be
 A = \frac{1}{\sqrt{2}}(\mathbbm{1}_1\ot\mathbbm{1}_2-\pi_{12})
\ee
given from the permutation operator $\pi_{12}$ defined by
\be
 \pi_{12}\ket{\psi_1}_1\ket{\psi_2}_2=\ket{\psi_2}_1\ket{\psi_1}_2.
\ee
{}For notational conciseness, hereafter we use the second form (\ref{extend2}) for observables of fermionic systems.

In order to address the question of separation of correlations, we need to provide beforehand a possible form of separation of operators pertaining to each of the fermions analogously to (\ref{disdec}) in the distinguishable case.   Clearly, the problem is that, despite the indistinguishable nature of the particles, we need to somehow label the particles by the observables in the measurement in order to define the correlations, and one possible approach for this is to utilize the locality which is also presupposed for the distinguishable case.  In the present situation, we have one observable measured by an apparatus on the left and, remotely separated from it, we have another observable measured by an apparatus on the right. 
These are represented by 
\be
 O_1:=\sigma_1(\vect{a})\ket{L}_1{}_1\bra{L},
\qquad
 O_2:=\sigma_2(\vect{b})\ket{R}_2{}_2\bra{R},
\ee
with which the observable of the composite system in (\ref{extend2}) becomes simply $A  (O_1\otimes O_2) A$.  
One can show (see Corollary 2) that it admits the separation with respect to the local observables: 
\be
  A  (O_1\otimes O_2) A=
 A  \(O_1\otimes \mathbb{I}_2\)A \cdot
  A \( \mathbb{I}_1\otimes O_2 \)A.
 \label{decOLR}
\ee
Here we have introduced the projectors,
\be
\mathbb{I}_1 :=I_1 \ket{L}_1{}_1\bra{L},
\qquad
\mathbb{I}_2 :=I_2 \ket{R}_2{}_2\bra{R},
  \label{identabbr}
\ee
with $I_i$ being the identity matrix in the spin space of particle $i$.   Operationally, each of the  projectors (\ref{identabbr}) may be interpreted as the observable which confirms merely the presence of the particles at the apparatus without measuring the spin. 

\begin{table}[t]
\label{rel1}
\begin{center}
\renewcommand\arraystretch{1.5}
\begin{tabular}{c|cc}
 \hline
 \phantom{xxxxxx}  \phantom{xxxxxxxxxxx}  \phantom{xxxxxxxxxx}  & \phantom{xxxxxx} Distinguishable particles  \phantom{xxxxxx} &  \phantom{xxxxxx} Fermions  \phantom{xxxxxx}  \\ 
 \hline \hline
& $O =O_1\otimes O_2\;$ & $O=A\(O_1\otimes O_2 \)A\;$\\
Observables & 
$O^{(1)}=O_1\otimes \mathbbm{1}_2\;$ & $O^{(1)}=A\(O_1\otimes \mathbb{I}_2 \)A\;$\\
& 
$O^{(2)}=\mathbbm{1}_1\otimes O_2\;$ & $O^{(2)}=A\(\mathbb{I}_1\otimes O_2 \)A\;$\\
Factorization of observables & $O=O^{(1)} O^{(2)}$&$O=O^{(1)} O^{(2)}$\\
Factorization of correlations & $\bra{\Psi}O\ket{\Psi}=\bra{\Psi}O^{(1)}\ket{\Psi}\bra{\Psi}O^{(2)}\ket{\Psi}$
 &
 $\bra{\Psi}O \ket{\Psi}=\bra{\Psi}O^{(1)} \ket{\Psi}\bra{\Psi}O^{(2)}\ket{\Psi}$ \quad \\
Separable states & $\ket{\Psi} = \ket{\psi_1}_1\ket{\psi_2}_2$
 &
 $\ket{\Psi} = A \,\ket{\psi_1}_1\ket{\psi_2}_2$\\
 \hline
\end{tabular} 
\end{center}
\caption{Comparison of observables and separable states between distinguishable particles and fermions.}
\end{table}

Now that we have introduced the factorization (\ref{decOLR}) for observables of fermions analogously to the factorization (\ref{disdec}) for observables of distinguishable particles, we may also define the factorization of correlations by
\be
 \label{sec2_decompose1}
 \bra{\Psi}A\(O_1\otimes O_2\)A\ket{\Psi}=
   \bra{\Psi}A\(O_1\otimes \mathbb{I}_2\)A\ket{\Psi}
    \bra{\Psi}A\(\mathbb{I}_1\otimes O_2\)A\ket{\Psi}.
\ee
One can show (see Proposition 3) that, in analogy with (\ref{inddec}) and (\ref{prodtwo}) in the case of distingushable particles,  the factorization (\ref{sec2_decompose1}) occurs if the state $\ket{\Psi}$ is given by
\be
 \label{separable_state}
 \ket{\Psi}= A \,\ket{\psi_1}_1\ket{\psi_2}_2, \qquad
\ket{\psi_1}_1 = \ket{\alpha}_1\ket{L}_1, \quad \ket{\psi_2}_2 = \ket{\alpha^\prime}_2\ket{R}_2,
\ee
where $\ket{\alpha},\ket{\alpha^\prime}$ are arbitrary spin states.   
We note that, as stated before, such states (\ref{separable_state}) possess a complete set of properties mentioned in Sec.~IV (see Definition 1 and Theorem 2).

The correspondence of observables and separable states between distinguishable particles and fermions is summarized in Table I.  One finds that the transition rule from distinguishable particles to fermions is rather simple, {\it i.e.}, it is accomplished essentially by attaching the antisymmterizer $A$ to the states and also to the observables in the conjugate manner.    In fact, this rule is
applicable even to finding states which saturates the Cirel\rq son inequality with the observables given by (\ref{extend2}), 
where one learns that the upper bound of the inequality is attained by the state
\be
\label{singlet2}
 A\(\ket{{\rm Singlet}}_{12}\ot\ket{L}_1\ket{R}_2\)=\ket{{\rm Singlet}}_{12}\ot\frac{1}{\sqrt{2}}\(\ket{L}_1\ket{R}_2+\ket{R}_1\ket{L}_2\).
\ee

An important point to note in the transition rule is that  the assignment of observables entails introducing an \lq orthogonal structure\rq\ there, that is,  due to the locality property ${}_i\bra L {R}\rangle_i = 0$ the operators $O_1$ and $O_2$ are made to act in orthogonal subspaces when the individual spaces of the states are identified.   More generally, the states $\ket{\psi_1}_1$, $\ket{\psi_2}_2$ of the subsystems appearing in (\ref{singlet2}) must be orthogonal in the same sense.   On physical grounds, one may argue that
this orthogonal structure is required for indistinguishable particles to make a distinction between the measurement outcomes obtained for the two observables, while such a structure is unnecessary for distinguishable particles on account of its intrinsic capability of distinction.   As we shall see in the following sections, these characteristics of separability for indistinguishable (fermionic) states will be seen also for the system of general $N$ particles under an arbitrary partition into $s$ subsystems.  Unlike the present $N=2$ case, however, the general $N$ case may admit more than one orthogonal structure, and we shall also discuss the meaning of this ambiguity with respect to the separability criterion there.

\section{Antisymmetric Spaces and Partitioned Observables}
 \setcounter{equation}{0}  

To begin our discussions on arbitrary separability of $N$-particle fermionic states, we first provide a basic account of
identical $N$-particle systems along with necessary tools to treat the separation into arbitrary subsystems. 

Our total system is formed out of $N$ constituent systems representing the $N$ particles.   All of the constituent systems are identical to each other, and the Hilbert space ${\cal H}_i$ associated with the constituent system $i = 1, \dots, N$ is assumed to be $d$-dimensional, that is, ${\cal H}_i=\mathbb{C}^d$ for all $i$.   We assume $d \ge N$ to treat antisymmetric states.
If we let $\ket{a_i}_i$,  $a_i=0, 1, \cdots, d-1$, be a  complete orthonormal basis in ${\cal H}_i$, then the set of 
the direct product states $\bigotimes_{i=1}^N\ket{a_i}_i$ forms a complete orthonormal basis in the Hilbert space  ${\cal H}$ of the total system defined by the tensor product ${\cal H}^{\rm tot} =\bigotimes_{i=1}^N{\cal H}_i$. 

{}For the case of identical particles, a subsystem is a set of constituent systems in arbitrary number.  To specify how the total system is decomposed into such subsystems, 
it is convenient to introduce a
partition $\Gamma$ of the set of labels $\{1,2,\cdots,N\}$ of the constituents, {\it i.e.}, 
\be
\Gamma=\left\{\Gamma_1, \ldots, \Gamma_s \,\, \bigg\vert \, 
\bigcup_{k=1}^s\Gamma_k=\{1,2,\cdots,N\}, 
\quad \Gamma_k\neq \emptyset,
\quad
\Gamma_j\cap \Gamma_k=\emptyset, \quad
j\neq k
\right\}.
\label{defsetone}
\ee
Given a partition $\Gamma$, a subset is specified by 
those constituent systems whose labels comprise $\Gamma_k$ in $\Gamma$ for some $k$, and we use $\Gamma_k$ to denote the subsystem as well. 
The Hilbert space of subsystem $\Gamma_k$ is then given by the tensor product 
\be
{\cal H}(\Gamma_k)=\Bot_{i\in\Gamma_k}{\cal H}_i,
\label{subhsp}
\ee
which is equipped with a complete orthonormal basis $\Bot_{i\in\Gamma_k}\ket{a_i}_i$ in ${\cal H}(\Gamma_k)$. 
If we let $O_{k}$ be an observable in ${\cal H}(\Gamma_k)$, it admits the spectral decomposition,
\be
O_{k}=\sum_{\mu}\lambda_{\mu}\ket{\mu}_{\Gamma_k}{_{\Gamma_k}}\bra{\mu},
\qquad
\lambda_{\mu}\in\mathbb{R},
\qquad
\ket{\mu}_{\Gamma_k} \in {\cal H}(\Gamma_k),
\label{defomegaone}
\ee
fulfilling  
\be
{_{\Gamma_k}}\langle\mu|\nu\rangle_{\Gamma_k}=\delta_{\mu\nu},
\qquad
\ket{\mu}_{\Gamma_k}\in {\cal H}(\Gamma_k),
\qquad
\mu,\,  \nu = 0, 1, \ldots, d^{|\Gamma_k|}-1.
\label{defomegatwo}
\ee
Here, $\delta_{\mu\nu}$ is the Kronecker delta and $|\Gamma_k|$ is the cardinality of the set $\Gamma_k$, which in our case is the number  of constituents contained in the subsystem $\Gamma_k$.

In order to deal with identical fermionic particles, we first introduce  an isomorphic map among the constituent Hilbert spaces ${\cal H}_i$ by means of the correspondence between the basis vectors $\ket{a_i}_i\in{\cal H}_i$ and $\ket{a_i}_j\in{\cal H}_j$ for all $i, j = 1, \ldots, N$. 
The isomorphism is highly non-unique because of the arbitrariness in the choice of bases, but the following arguments hold under any
fixed choice of bases.
One then introduces permutations $\sigma \in S_N$ of the symmetric group $S_N$ on the states of constituents, which are implemented by self-adjoint operators $\pi_\sigma$ on ${\cal H}^{\rm tot}$ defined by
\be
\pi_\sigma\(\bigotimes_{i=1}^N\ket{a_i}_i\)=\bigotimes_{i=1}^N\ket{a_i}_{\sigma(i)}.
\label{defperm}
\ee
Note that $\pi_\sigma$ acts on the bra vectors as
\be
\(\bigotimes_{i=1}^N {}_i\bra{a_i}\)\pi_\sigma=\bigotimes_{i=1}^N{}_{\sigma(i)}\bra{a_i},
\ee
since $\pi_\sigma$ are self-adjoint. 
The action of permutation $\sigma$ on the labels of the constituents induces a map $\Gamma$ into another partition 
$\Gamma^\sigma=\{\Gamma_k^\sigma\}_{k=1}^s$ with $\Gamma_k^\sigma=\{\sigma(i)\,|\, i\in\Gamma_k\}$. 
Accordingly, we may classify permutations into two classes;  (i) those leaving the partition $\Gamma$ unchanged, {\it i.e.}, $\Gamma^\sigma=\Gamma$, and (ii) those changing $\Gamma$,  {\it i.e.}, $\Gamma^\sigma\neq\Gamma$.  We denote the set of 
permutations belonging to the former class by ${\cal I}(\Gamma)$ and the set of the latter by $\bar{\cal I}(\Gamma)$.   
The symmetric group $S_N$ is thus a disjoint union of these, $S_N = {\cal I}(\Gamma) \cup \bar{\cal I}(\Gamma)$, and we may further
consider the quotient set $S_N/{\cal I}(\Gamma)$ of $S_N$ with respect to ${\cal I}(\Gamma)$.

The anti-symmetrizer ${\cal A}$, which plays a key role in the subsequent discussions of fermionic states,  is constructed 
from the permutation operators as 
\be
\label{asym01}
      {\cal A}:= {1\over{N !}} \sum_{\sigma\in S_N}{\rm sgn}(\sigma) \pi_{\sigma},
      \label{defAcal}
\ee
where ${\rm sgn}(\sigma)=1$ for even permutations, and ${\rm sgn}(\sigma)=-1$ for odd ones. 
Note that the anti-symmetrizer ${\cal A}$ is a projection operator, satisfying $\mathcal{A}^2=\mathcal{A}$, $\mathcal{A}^\dag= \mathcal{A}$ and
\be
\pi_\sigma \mathcal{A}=\mathcal{A}\pi_\sigma={\rm sgn}(\sigma)\mathcal{A}
\label{Api}
\ee
for all $\sigma\in S_N$. 

The space of fermionic states ${\cal H}_A^{\rm tot}$ is the antisymmetric linear subspace of ${\cal H}^{\rm tot}$ obtained by
collecting all antisymmetric states  in  ${\cal H}^{\rm tot}$, which is written as
\be
{\cal H}_A^{\rm tot}= {\rm Asym}\,{\cal H}^{\rm tot}
:= \big\{\ket{\Psi}\,\big|\,\mathcal{A}\ket{\Psi}=\ket{\Psi},\, 
\ket{\Psi}\in{\cal H}^{\rm tot} \big\}.
\label{defasym}
\ee
Likewise, we may construct the antisymmetric subspace of a given multi-particle space ${\cal K}$ by considering only those $\pi_\sigma$ corresponding to the permutations $\sigma \in S_{|\Gamma_k|}$ which map vectors from ${\cal K}$ to ${\cal K}$. 
Utilizing these permutation operators, we can define the anti-symmetrizer $\mathcal{A}$ on ${\cal K}$ by (\ref{defAcal}) upon restriction to the permutations appropriate for ${\cal K}$ (we use the same symbol $\mathcal{A}$ for simplicity), and thereby define the antisymmetric subspace 
$ {\rm Asym}\,{\cal K}$ analogously to (\ref{defasym}) from ${\cal K}$.    For example, if this construction is applied to the space ${\cal H}(\Gamma_k)$ in (\ref{subhsp}), we then have the Hilbert space ${\cal H}_A(\Gamma_k) := {\rm Asym}\, {\cal H}(\Gamma_k)$ of the subsystem $\Gamma_k$ consisting of only antisymmetric states.

Having furnished necessary tools for our discussion, we now discuss observables in ${\cal H}_A^{\rm tot}$ constructed out of observables on subsystems specified by the partition $\Gamma$.   Given an observable $O_{k}$ in ${\cal H}_A(\Gamma_k)$ for each $k = 1, \ldots, s$, we may consider, by taking account of the anti-symmetric property of the states required, the operator in ${\cal H}_A^{\rm tot}$,
\be
O(\Gamma) := A(\Gamma)\(\bigotimes_{k=1}^sO_{k}\)A(\Gamma),
\label{defO}
\ee
where we have introduced, for our technical convenience, the rescaled anti-symmetrizer,
\be
\label{resasym}
      A(\Gamma):= \sqrt{M(\Gamma)}\, {\cal A}
      \label{defA}
\ee
with the multinomial coefficient associated with the partition $\Gamma$,
\be
\label{cardm}
M(\Gamma):= {{N!}\over{\prod_{k=1}^s|\Gamma_k|!}}.
\ee
We denote the class of operators associated 
with the partition $\Gamma$ of the form (\ref{defO}) by ${\cal C}(\Gamma)$.  
A salient property of the class ${\cal C}(\Gamma)$ is that it is invariant under permutation, ${\cal C}(\Gamma^\sigma)= {\cal C}(\Gamma)$ for any $\sigma \in S_N$.  This can be seen from
\begin{proposition}
To an observable $O(\Gamma) \in {\cal C}(\Gamma)$ defined in (\ref{defO}), one can always find an observable $O(\Gamma^\sigma) \in {\cal C}(\Gamma^\sigma)$ for any $\sigma \in S_N$ such that 
\be
O(\Gamma)=O(\Gamma^\sigma).
\ee
\end{proposition}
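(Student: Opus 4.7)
The plan is to exploit two facts: first, that the multinomial factor $M(\Gamma)$ in (\ref{cardm}) depends only on the multiset of block sizes $\{|\Gamma_k|\}$, which is invariant under any permutation $\sigma \in S_N$; and second, that the total antisymmetrizer $\mathcal{A}$ is the same single operator on ${\cal H}^{\rm tot}$, independent of how we partition the labels. Combining these gives $M(\Gamma^\sigma)=M(\Gamma)$ and hence $A(\Gamma^\sigma)=A(\Gamma)$ as operators on ${\cal H}^{\rm tot}$. So the whole content of the proposition reduces to producing, from the given tensor factors $\{O_k\}$ associated with $\Gamma$, a new set $\{O_k'\}$ associated with $\Gamma^\sigma$ such that conjugation by $A(\Gamma)$ gives the same operator.

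Next, I would insert the identity in the form $\pi_\sigma^{-1}\pi_\sigma$ on each side of $\bigotimes_{k=1}^s O_k$ in the definition (\ref{defO}) and then push the outer $\pi_\sigma^{\pm 1}$ through $A(\Gamma)$ using (\ref{Api}). Since $\pi_\sigma A(\Gamma) = A(\Gamma) \pi_\sigma = \mathrm{sgn}(\sigma)\,A(\Gamma)$, the two resulting signs multiply to $\mathrm{sgn}(\sigma)^2 = 1$, leaving
\begin{equation}
O(\Gamma) \;=\; A(\Gamma^\sigma)\,\Bigl(\pi_\sigma \bigotimes_{k=1}^s O_k\, \pi_\sigma^{-1}\Bigr)\,A(\Gamma^\sigma).
\end{equation}

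The remaining step is to identify $\pi_\sigma \bigotimes_{k=1}^s O_k\, \pi_\sigma^{-1}$ with a tensor product $\bigotimes_{k=1}^s O_k'$ where each $O_k'$ is an observable on ${\cal H}(\Gamma_k^\sigma)$. Because $\pi_\sigma$ implements a relabeling of the constituent slots according to $i \mapsto \sigma(i)$, conjugation by $\pi_\sigma$ carries any operator acting on the slots indexed by $\Gamma_k$ to an operator of the same algebraic form acting on the slots indexed by $\Gamma_k^\sigma$. This amounts to unpacking the definition (\ref{defperm}) of $\pi_\sigma$ and tracking where each factor $O_k$ lands; once this is done, setting $O_k' := \pi_\sigma O_k \pi_\sigma^{-1}$ (viewed as an operator on ${\cal H}(\Gamma_k^\sigma)$ via the fixed isomorphism between constituent spaces) yields the desired $O(\Gamma^\sigma) \in {\cal C}(\Gamma^\sigma)$.

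The main obstacle I anticipate is the bookkeeping in this last step: one must verify that the conjugated tensor product genuinely factorizes as a tensor product over the blocks of $\Gamma^\sigma$, rather than producing some entangled operator across blocks. This follows because $\pi_\sigma$ permutes whole slots independently and each $O_k$ already respects the slot structure of $\Gamma_k$, but it does require invoking the chosen isomorphism between the ${\cal H}_i$ to make sense of ``the same'' operator on differently-labelled subsystems. Once this identification is set up consistently, Hermiticity of $O_k'$ is immediate from Hermiticity of $O_k$ and unitarity of $\pi_\sigma$, so $O(\Gamma^\sigma)$ is indeed a legitimate element of ${\cal C}(\Gamma^\sigma)$.
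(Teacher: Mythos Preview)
Your proposal is correct and follows essentially the same approach as the paper: both arguments use (\ref{Api}) to insert $\pi_\sigma$ between $A(\Gamma)$ and the tensor product at no cost, then identify the conjugated operator as a tensor product over the relabeled blocks $\Gamma_k^\sigma$. The only cosmetic difference is that the paper carries this out at the level of the spectral decomposition (tracking eigenvectors $\ket{\mu_k}_{\Gamma_k} \mapsto \ket{\mu_k}_{\Gamma_k^\sigma}$ explicitly), whereas you work directly with the operators $O_k \mapsto \pi_\sigma O_k \pi_\sigma^{-1}$; the content is the same.
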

\begin{proof}
Using (\ref{Api}) and the spectral decomposition 
of $O_{k}$ with the label $\mu$ written here as $\mu_k$ to distinguish the subsystems the states belong to, we find
\be
\begin{split}
\label{propr}
O(\Gamma)&= A(\Gamma)\(\Bot_{k=1}^s\sum_{\mu_k}\lambda_{\mu_k}\ket{\mu_k}_{\Gamma_k} {_{\Gamma_k}}\bra{\mu_k}\)A(\Gamma)\\
&=\sum_{\mu_1,\cdots,\mu_s}\(\prod_{k=1}^s\lambda_{\mu_k}\)A(\Gamma)\(\Bot_{k=1}^s\ket{\mu_k}_{\Gamma_k}\)\(\Bot_{k=1}^s {_{\Gamma_k}}\bra{\mu_k}\)A(\Gamma)\\
&=\sum_{\mu_1,\cdots,\mu_s}\(\prod_{k=1}^s\lambda_{\mu_k}\)A(\Gamma)\pi_\sigma\(\Bot_{k=1}^s\ket{\mu_k}_{\Gamma_k}\)\(\Bot_{k=1}^s {_{\Gamma_k}}\bra{\mu_k}\)\pi_\sigma A(\Gamma)\\
&=\sum_{\mu_1,\cdots,\mu_s}\(\prod_{k=1}^s\lambda_{\mu_k}\)A(\Gamma^\sigma)\(\Bot_{k=1}^s\ket{\mu_k}_{\Gamma_k^\sigma}\)\(\Bot_{k=1}^s {_{\Gamma_k^\sigma}}\bra{\mu_k}\)A(\Gamma^\sigma)\\
&=A(\Gamma^\sigma)\(\Bot_{k=1}^s\sum_{\mu_k}\lambda_{\mu_k}\ket{\mu_k}_{\Gamma_k^\sigma} {_{\Gamma_k^\sigma}}\bra{\mu_k}\)A(\Gamma^\sigma)
\end{split}
\ee
where we have used $A(\Gamma)=A(\Gamma^\sigma)$.   The last line gives an observable belonging to ${\cal C}(\Gamma^\sigma)$, and hence if we just define it to be our $O(\Gamma^\sigma)$, we have $O(\Gamma)=O(\Gamma^\sigma)$.  
\end{proof}

\medskip

The above proposition shows that, as far as the observables belonging to the class ${\cal C}(\Gamma)$ are concerned, any partitions which are related under permutations lead to the same outcomes in the total system ${\cal H}_A^{\rm tot}$.
On the other hand, one may instead consider a different class $\tilde{\cal C}(\Gamma)$ of observables constructed from the same set of observables $O_{k}$ in ${\cal H}_A(\Gamma_k)$  by collecting all possible permutations of the product,
\be
{\tilde O}(\Gamma):=\sum_{\sigma\in S_N/{\cal I}(\Gamma)} \pi_\sigma \left( \Bot_{k=1}^s O_{k} \right) \pi_\sigma,
\label{defcalO}
\ee
where we take the summation over the quotient set $S_N/{\cal I}(\Gamma)$ to eliminate
possible degeneracies due to the invariant permutations in ${\cal I}(\Gamma)$.
The two observables,  $O(\Gamma)$ in (\ref{defO}) and ${\tilde O}(\Gamma)$ in (\ref{defcalO}),  enjoy a simple relationship as seen in \begin{lemma}
The observables $O(\Gamma)$ and ${\tilde O}(\Gamma)$ defined in (\ref{defO}) and  (\ref{defcalO}), respectively, are related by 
\be
\mathcal{A}\, {\tilde O}(\Gamma)\, \mathcal{A}=O(\Gamma).
\ee
\end{lemma}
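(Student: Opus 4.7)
The plan is to reduce both sides to a common form $M(\Gamma)\,\mathcal{A}\,T\,\mathcal{A}$, where $T := \bigotimes_{k=1}^s O_k$, by exploiting the absorption identity $\pi_\sigma\mathcal{A} = \mathcal{A}\pi_\sigma = \mathrm{sgn}(\sigma)\mathcal{A}$ from equation (\ref{Api}). The right-hand side $O(\Gamma) = A(\Gamma)\,T\,A(\Gamma)$ equals $M(\Gamma)\,\mathcal{A}\,T\,\mathcal{A}$ immediately from the definition $A(\Gamma) = \sqrt{M(\Gamma)}\,\mathcal{A}$ together with $\mathcal{A}^2 = \mathcal{A}$. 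So the whole content of the lemma is to show that the left-hand side collapses to the same expression.

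First I would expand
\[
\mathcal{A}\,\tilde{O}(\Gamma)\,\mathcal{A}
= \sum_{\sigma\in S_N/\mathcal{I}(\Gamma)} \mathcal{A}\,\pi_\sigma\,T\,\pi_\sigma\,\mathcal{A},
\]
and then apply (\ref{Api}) on the outer factors to obtain $\mathcal{A}\,\pi_\sigma\,T\,\pi_\sigma\,\mathcal{A} = \mathrm{sgn}(\sigma)^2\,\mathcal{A}\,T\,\mathcal{A} = \mathcal{A}\,T\,\mathcal{A}$ for every representative $\sigma$. At this point each summand is independent of $\sigma$, so the sum simply contributes the cardinality $|S_N/\mathcal{I}(\Gamma)|$ as an overall factor.

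The only nontrivial step is then identifying this cardinality with the multinomial coefficient $M(\Gamma)$ in (\ref{cardm}). I would argue that a permutation $\sigma$ lies in $\mathcal{I}(\Gamma)$ precisely when $\sigma$ maps each block $\Gamma_k$ to itself, so $\mathcal{I}(\Gamma)$ is (isomorphic to) the direct product $\prod_{k=1}^s S_{|\Gamma_k|}$ of block-wise symmetric groups. Hence $|\mathcal{I}(\Gamma)| = \prod_{k=1}^s |\Gamma_k|!$ and $|S_N/\mathcal{I}(\Gamma)| = N!/\prod_{k=1}^s |\Gamma_k|! = M(\Gamma)$. Combining, $\mathcal{A}\,\tilde{O}(\Gamma)\,\mathcal{A} = M(\Gamma)\,\mathcal{A}\,T\,\mathcal{A} = O(\Gamma)$.

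There is no serious obstacle here; the mild subtlety is only bookkeeping. One must be careful that the definition of $\mathcal{I}(\Gamma)$ in the text (permutations $\sigma$ with $\Gamma^\sigma = \Gamma$, meaning the blocks are preserved as sets) really does coincide with the stabilizer subgroup $\prod_k S_{|\Gamma_k|}$, since a priori one might also have to mod out by permutations of blocks of equal cardinality. Inspecting (\ref{defsetone}), however, the blocks $\Gamma_k$ are indexed (labeled) subsets, so $\Gamma^\sigma = \Gamma$ means blockwise equality and the counting above is correct.
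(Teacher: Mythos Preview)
Your argument is correct and follows exactly the same route as the paper's proof: expand $\mathcal{A}\,\tilde O(\Gamma)\,\mathcal{A}$ as the sum over coset representatives, use (\ref{Api}) to strip the $\pi_\sigma$'s, and then invoke $|S_N/\mathcal{I}(\Gamma)| = M(\Gamma)$. The paper in fact simply asserts this cardinality without justification, so your paragraph identifying $\mathcal{I}(\Gamma)$ with the Young subgroup $\prod_k S_{|\Gamma_k|}$ (and your remark on why block labels matter) adds detail but no new idea.
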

\begin{proof}
Using (\ref{Api}) and noting the cardinality of $S_N/{\cal I}(\Gamma)$ being $M(\Gamma)$, we have
\be
\mathcal{A}\, {\tilde O}(\Gamma)\, \mathcal{A}=\sum_{\sigma\in S_N/{\cal I}(\Gamma)}\mathcal{A} \pi_\sigma \left( \Bot_{k=1}^s O_{k} \right) \pi_\sigma \mathcal{A}=\sum_{\sigma\in S_N/{\cal I}(\Gamma)}\frac{O(\Gamma)}{M(\Gamma)}=O(\Gamma).
\ee
\end{proof}

\medskip

Now, it is straightforward to show
\begin{proposition}
Two observables $O(\Gamma) \in {\cal C}(\Gamma)$ and ${\tilde O}(\Gamma) \in \tilde{\cal C}(\Gamma)$ which are constructed from a given set of observables $\{O_{k}\}_{k=1}^s$ by the prescriptions (\ref{defO}) and (\ref{defcalO}), respectively, share the same expectation values under arbitrary antisymmetric states $\ket{\Psi}\in{\cal H}_A^{\rm tot}$, that is,
\be
\label{prop2}
\bra{\Psi}O(\Gamma)\ket{\Psi}=\bra{\Psi}{\tilde O}(\Gamma)\ket{\Psi},
\qquad \forall \ket{\Psi}\in{\cal H}_A^{\rm tot}.
\ee
\end{proposition}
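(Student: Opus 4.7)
The plan is to observe that the statement is essentially a one-line corollary of Lemma~1 combined with the defining property of antisymmetric states. By definition of ${\cal H}_A^{\rm tot}$ in (\ref{defasym}), any $\ket{\Psi}\in{\cal H}_A^{\rm tot}$ satisfies $\mathcal{A}\ket{\Psi}=\ket{\Psi}$, and since $\mathcal{A}$ is self-adjoint ($\mathcal{A}^\dag=\mathcal{A}$), we also have $\bra{\Psi}\mathcal{A}=\bra{\Psi}$.

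First I would sandwich ${\tilde O}(\Gamma)$ between two copies of the identity resolved on $\ket{\Psi}$:
\be
\bra{\Psi}{\tilde O}(\Gamma)\ket{\Psi}
= \bra{\Psi}\mathcal{A}\,{\tilde O}(\Gamma)\,\mathcal{A}\ket{\Psi}.
\ee
Then I would invoke Lemma~1, which gives $\mathcal{A}\,{\tilde O}(\Gamma)\,\mathcal{A}=O(\Gamma)$, to arrive at $\bra{\Psi}O(\Gamma)\ket{\Psi}$, establishing (\ref{prop2}).

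Since ${\cal A}^2 = {\cal A}$ and the self-adjointness of ${\cal A}$ were already recorded right after (\ref{defAcal}), and Lemma~1 has already been proved using (\ref{Api}) and the counting of $S_N/{\cal I}(\Gamma)$, no further work is needed. There is no real obstacle here: the whole proposition is a bookkeeping consequence of the fact that on ${\cal H}_A^{\rm tot}$ the anti-symmetrizer acts trivially, so any two operators related by the conjugation ${\tilde O} \mapsto \mathcal{A}\,{\tilde O}\,\mathcal{A}$ are indistinguishable at the level of expectation values. The only conceptual point worth flagging is that while $O(\Gamma)$ and ${\tilde O}(\Gamma)$ coincide on ${\cal H}_A^{\rm tot}$, they need not coincide as operators on the full tensor product space ${\cal H}^{\rm tot}$; the equivalence is specific to the antisymmetric sector, which is exactly why the assumption $\ket{\Psi}\in{\cal H}_A^{\rm tot}$ is essential.
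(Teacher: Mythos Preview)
Your proof is correct and follows exactly the same approach as the paper: insert $\mathcal{A}$ on both sides of ${\tilde O}(\Gamma)$ using $\mathcal{A}\ket{\Psi}=\ket{\Psi}$ for $\ket{\Psi}\in{\cal H}_A^{\rm tot}$, then apply Lemma~1. The paper's own proof is the one-line version of what you wrote.
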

\begin{proof}
Using Lemma 1 and $\mathcal{A}\ket{\Psi} = \ket{\Psi}$ for $\ket{\Psi}\in{\cal H}_A^{\rm tot}$, we obtain (\ref{prop2}).
\end{proof}

\medskip

At this point we briefly consider the problem of constructing observables of $N$-particle systems from those of single-particle constituent systems.  For simplicity,
we use the abbreviated notation $A:=A(\Gamma)$ hereafter.   To examine the physical meaning of the observable $O(\Gamma)$ defined in (\ref{defO}), let us consider the $N=2$ case $\Gamma=\{\Gamma_1,\Gamma_2\}$ with $\Gamma_1=\{1\}$ and $\Gamma_2=\{2\}$, where we have observables $O_1, O_1^\prime$ in ${\cal H}_1$ and $O_2, O_2^\prime$ in ${\cal H}_2$.  We assume that $O_1$ and $O_2$ (and similarly $O_1^\prime$ and $O_2^\prime$) are operators sharing the same spectrum decomposition with the same eigenvectors in each of the constituent Hilbert space.   In other words, we just have two sets of physically distinct operators acting on the spaces, ${\cal H}_1$ and ${\cal H}_2$, and accordingly we may drop the indices $1$ and $2$ which indicate the labels of the subsystems that the operators act on and write them simply as $O$ and $O^\prime$.  
The observable $O(\Gamma)$ acting on the total system ${\cal H}_A^{\rm tot}$ constructed from the two observables $O$ and $O^\prime$ would then be
\be
\label{defOroundzero}
O(\Gamma)=A\(O\ot O^\prime\)A.
\ee
We are thus tempted to interpret $O(\Gamma)$ as an operator implementing a \lq simultaneous\rq~measurement of $O$ and $O^\prime$ on the constituent systems.  
On the other hand, we may expect that the individual measurements of the two observables on the $N=2$ system are represented by 
\be
Q(\Gamma)=A\(O\ot\mathbbm{1}\)A,
\qquad
{\rm and}
\qquad
Q^\prime(\Gamma)=A\(O^\prime\ot\mathbbm{1}\)A,
\label{defOroundone}
\ee
respectively.
We observe that, assuming that the two observables commute $[O,O^\prime]=0$, 
the product of the two operators in (\ref{defOroundone}) becomes a self-adjoint operator and reads 
\be
Q(\Gamma)\, Q^\prime(\Gamma)=O(\Gamma)+A\(OO^\prime\ot\mathbbm{1}\)A,
\ee
where $O(\Gamma)$ is the one given in (\ref{defOroundzero}).
Imposing further that the two operators be orthogonal (or their
supports be mutually disjoint) $OO^\prime=0$, we obtain
\be
Q(\Gamma)\, Q^\prime(\Gamma)=O(\Gamma).
\ee
When this is done, the final form admits the interpretation that the product of two individual measurements yields the simultaneous measurement given by the product form (\ref{defOroundzero}).

Our argument can be easily extended to the case of infinite dimensional Hilbert spaces such as ${\cal H}_i=\C^d\ot L^2(\R^3)$, where one may consider two spatially localized observables possessing no overlaps.  Such an example has been mentioned in
\cite{GMW02} in considering separability for bipartition
of $N$-particle states, where observables of the class $\tilde{\cal C}(\Gamma)$ which fulfill the orthogonality $OO^\prime=0$ are adopted.
Our discussion to be expounded below will be based on the observables of the class ${\cal C}(\Gamma)$, but the consistency between the two is ensured by Proposition 2 which asserts that at the level of expectation values for $\ket{\Psi}\in{\cal H}_A^{\rm tot}$, the choice of the class, 
${\cal C}(\Gamma)$ or $\tilde{\cal C}(\Gamma)$, is immaterial.    More on this point will be discussed in Sec.~V where we consider the physical implications of separability.

\section{Arbitrary Separability for $N$-Particle Antisymmetric  States}
 \setcounter{equation}{0}  

We now move on to present our condition for separability of $N$-particle states into arbitrary subsystems and provide states which fulfill the condition.  To this end, following the procedure employed in \cite{GMW02}, and also hinted by the argument given at the end of the previous section, we furnish 
a set of orthogonal subspaces within $\C^d$ and thereby obtain operators which are mutually orthogonal and admit their simultaneous measurement.   This will allow us to find a set of states in  ${\cal H}_A^{\rm tot}$ which are factorizable with respect to a given partition $\Gamma$ of the original system under the orthogonal structure.

\subsection{Orthogonal Structure in States and Observables}

Let us first choose a set $V$ of mutually orthogonal linear subspaces $V_j$, $j = 1, \ldots, m$, in $ \C^d$, {\it i.e.},
\be
\label{defpart}
V=\left\{V_1, \ldots, V_m \,\, \Bigg\vert \, 
\bigoplus_{k=1}^m V_k\subseteq\C^d, \quad
V_k\bot V_l, \quad
k\neq l
\right\}.
\label{defset}
\ee
Such a set $V$ can be used to furnish an orthogonal structure in each of the constituent Hilbert spaces, by
considering the subspaces $V_k$ within ${\cal H}_i = \C^d$ and denote them as $V_k[{\cal H}_i] \subset {\cal H}_i$ for
$k = 1, \ldots, m$.   For our purposes, we choose $V$ in (\ref{defset}) such that $m = s$, where $s$ is the number of subsystems in $\Gamma=\{\Gamma_k\}_{k=1}^s$, and that 
$\dim V_k \ge  \vert \Gamma_k \vert$ for all $k$.   We associate a common  subspace $V_k$ to all the constituents belonging to $\Gamma_k$, and then define the subspace $W(\Gamma_k, V_k) \subset {\cal H}_A(\Gamma_k)$ by
\be
\label{wspa}
W(\Gamma_k, V_k) := {\rm Asym}\left(\bigotimes_{i\in\Gamma_k}V_k[{\cal H}_i]\right).
\ee
In more concrete terms, we  
choose those states in ${\cal H}(\Gamma_k)$ which are confined to the subspace $V_k$ in each of the constituent ${\cal H}_i$ for $i \in \Gamma_k$ and then anti-symmetrize them with the procedure analogous to (\ref{defasym}) and the remarks that follow.   The resultant states form the space $W(\Gamma_k, V_k)$ in (\ref{wspa}).

Since $W(\Gamma_k,V_k)$ is itself a Hilbert space and has a complete orthonormal basis, we may consider
observables $O_k$ with its support only in $W(\Gamma_k,V_k) \subset {\cal H}_A(\Gamma_k)$.  In other words,  
$O_k$ admits the spectral decomposition,  
\be
\label{ospecw}
O_{k}=\sum_{\mu}\lambda_{\mu}\ket{\mu}_{\Gamma_k}{_{\Gamma_k}}\bra{\mu},
\qquad
\lambda_{\mu}\in\mathbb{R},
\qquad
\ket{\mu}_{\Gamma_k} \in W(\Gamma_k,V_k).
\ee
Note that $O_k$ is still a self-adjoint operator in ${\cal H}_A(\Gamma_k)$, and 
out of these observables in the subsystems we can construct operators on the form (\ref{defO}) as before.  More explicitly, we consider the class of operators in ${\cal H}_A^{\rm tot}$ given by
\be
\label{clde}
{\cal C}(\Gamma, V):=\left\{O\, \Bigg\vert \,\,  O= A\left(\bigotimes_{k=1}^sO_k\right)A \right\},
\ee
with the requirement (\ref{ospecw}).  Note that ${\cal C}(\Gamma, V)$ is uniquely defined to a given pair $(\Gamma, V)$.
Note also that, 
by construction, ${\cal C}(\Gamma, V)$ is not closed under addition, that is, $O+O^\prime\not\in {\cal C}(\Gamma, V)$ for 
$O$, $O' \in {\cal C}(\Gamma, V)$ in general.
Further, an operator $O$ may belong to more than one such classes defined from different orthogonal sets $V$.  

Similarly, we also consider the set of states which are obtained by anti-symmetrizing fully factorized states with respect to the partition $\Gamma$ 
\be
\label{asdec}
S(\Gamma, V):=\left\{\ket{\Psi}\, \Bigg\vert \,\,  \ket{\Psi}=A\bigotimes_{k=1}^s\ket{\psi_k}, \quad  \ket{\psi_k}\in W(\Gamma_k,V_k)\right\}.
\ee
Note that $S(\Gamma, V) \subseteq {\cal H}_A^{\rm tot}$ and that $S(\Gamma, V)$ is not a linear space.  Note also that, in general, two sets $S(\Gamma, V)$ and $S(\Gamma, V')$ for different $V$ and $V'$ are not mutually exclusive in ${\cal H}_A^{\rm tot}$.  
This implies that a single state may belong both to $S(\Gamma, V)$ and $S(\Gamma, V')$ or possibly more, as we see explicitly in Sect.~V.  Hereafter we verify that the states in $S(\Gamma,V)$ correspond to the separable states of distinguishable particles from the two view points: Decomposition of correlations and complete set of the properties.

\subsection{Factorization of Correlations}

The next lemma is useful in our following arguments.
\begin{lemma}
The inner product of any two states $\ket{\Psi}$, $\ket{\Phi} \in S(\Gamma, V)$ is factorizable according to the partition $\Gamma$,
\be
\braket{\Phi|\Psi}=\prod_{k=1}^s\braket{\phi_k|\psi_k}.
\label{decip}
\ee
In other words, for $\ket{\psi_k}, \ket{\phi_k}\in W(\Gamma_k,V_k)$, $k = 1, \ldots, s$, we have
\be
\(\bigotimes_{k=1}^s\bra{\phi_k} A\)\( A\bigotimes_{k=1}^s\ket{\psi_k}\)=\prod_{k=1}^s\braket{\phi_k|\psi_k}.
\label{dpdec}
\ee
\end{lemma}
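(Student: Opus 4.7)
The plan is to rewrite the left-hand side of (\ref{dpdec}) as a single matrix element of the anti-symmetrizer $\mathcal{A}$ and then collapse the resulting sum over $S_N$ to one over the subgroup of permutations that map each block $\Gamma_k$ to itself, using the orthogonality $V_k\perp V_{k'}$ to annihilate every other contribution. Since $A=\sqrt{M(\Gamma)}\,\mathcal{A}$ is self-adjoint with $\mathcal{A}^2=\mathcal{A}$, one has $A^2=M(\Gamma)\,\mathcal{A}$, so the left-hand side of (\ref{dpdec}) becomes
\begin{equation}
M(\Gamma)\Bigl(\Bot_{k=1}^s\bra{\phi_k}\Bigr)\mathcal{A}\Bigl(\Bot_{k=1}^s\ket{\psi_k}\Bigr)=\frac{M(\Gamma)}{N!}\sum_{\sigma\in S_N}{\rm sgn}(\sigma)\Bigl(\Bot_{k=1}^s\bra{\phi_k}\Bigr)\pi_\sigma\Bigl(\Bot_{k=1}^s\ket{\psi_k}\Bigr).
\end{equation}

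Next, I would classify $\sigma\in S_N$ by its action on the blocks. If some $i\in\Gamma_k$ is sent to $\sigma(i)\in\Gamma_{k'}$ with $k'\neq k$, then at slot $\sigma(i)$ the ket contributes a vector from $V_k[{\cal H}_{\sigma(i)}]$ while the bra lies in $V_{k'}[{\cal H}_{\sigma(i)}]$, so $V_k\perp V_{k'}$ forces the entire term to vanish. Only the subgroup $\mathcal{S}(\Gamma):=\prod_{k=1}^s S_{\Gamma_k}$ of permutations preserving every $\Gamma_k$ individually survives. For such $\sigma$, $\pi_\sigma$ factors as $\prod_k\pi_{\sigma|_{\Gamma_k}}$, and the antisymmetry of $\ket{\psi_k}\in W(\Gamma_k,V_k)$ within its block gives $\pi_{\sigma|_{\Gamma_k}}\ket{\psi_k}={\rm sgn}(\sigma|_{\Gamma_k})\ket{\psi_k}$; the product of these signs equals ${\rm sgn}(\sigma)$, cancelling the sign in the sum. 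Each surviving term thus contributes $\prod_{k=1}^s\braket{\phi_k|\psi_k}$, and with $|\mathcal{S}(\Gamma)|=\prod_k|\Gamma_k|!$ surviving permutations the prefactor reduces to $M(\Gamma)\prod_k|\Gamma_k|!/N!=1$ by the definition (\ref{cardm}) of $M(\Gamma)$, yielding (\ref{dpdec}).

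The main obstacle is the orthogonality step: I need to expand $\ket{\psi_k}$ in product vectors from $\bigotimes_{i\in\Gamma_k}V_k[{\cal H}_i]$, track how $\pi_\sigma$ relocates each factor to slot $\sigma(i)$, and invoke linearity to conclude that even a single mismatched slot at which the bra-side subspace $V_{k'}$ is orthogonal to the ket-side subspace $V_k$ kills the whole term. Once this bookkeeping is cleanly established, the remainder is sign arithmetic and multinomial counting.
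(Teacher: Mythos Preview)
Your proposal is correct and follows essentially the same route as the paper: the paper likewise reduces $A^2$ to $M(\Gamma)\,\mathcal{A}$, splits the sum over $S_N$ into the subgroup ${\cal I}(\Gamma)=\prod_k S_{\Gamma_k}$ (your $\mathcal{S}(\Gamma)$) and its complement, kills the complement via $V_k\perp V_{k'}$, and uses the block-wise antisymmetry of each $\ket{\psi_k}$ together with $|{\cal I}(\Gamma)|=\prod_k|\Gamma_k|!$ to collapse the surviving sum to $\prod_k\braket{\phi_k|\psi_k}$. The only cosmetic difference is that the paper packages the two pieces as operators $F$ and $G$ with $\mathcal{A}=F+G$, but the content is identical.
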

\begin{proof}
In terms of the set ${\cal I}(\Gamma)$ of invariant permutations and its complement
$\bar{\cal I}(\Gamma)$, the anti-symmetrizer $\mathcal{A}$ can be decomposed into
\be
\mathcal{A}=F+G, \qquad 
\hbox{with}\qquad 
F:=\frac{1}{N!}\sum_{\sigma\in {\cal I}({\Gamma})}{\rm sgn}(\sigma)\pi_\sigma,
\qquad 
G:=\frac{1}{N!}\sum_{\sigma\in \bar{{\cal I}}({\Gamma})}{\rm sgn}(\sigma)\pi_\sigma.
\ee
{}For permutations $\sigma\in{\cal I}(\Gamma)$, we obtain
\be
\pi_\sigma\bigotimes_{k=1}^s\ket{\psi_k}={\rm sgn}(\sigma)\bigotimes_{k=1}^s\ket{\psi_k},
\ee
which yields
\be
F\bigotimes_{k=1}^s\ket{\psi_k}=\frac{1}{M}\bigotimes_{k=1}^s\ket{\psi_k},
\label{fdec}
\ee
with $M = M(\Gamma)$ in (\ref{cardm}), because the cardinality of  ${\cal I}(\Gamma)$ is equal to $\prod_{k=1}^s|\Gamma_k|!$.
Also, we have
\be
\(\bigotimes_{k=1}^s\bra{\phi_k}\)G\bigotimes_{k=1}^s\ket{\psi_k}=0.
\label{gdec}
\ee
This can be seen by noticing that any permutation $\sigma\in\bar{\cal I}(\Gamma)$ contained in $G$ maps all the states in $V_k[ {\cal H}_i]$ for some $i\in\Gamma_k$ 
to the corresponding states in $V_k[{\cal H}_{\sigma(i)}]$ with $\sigma(i)\not\in\Gamma_k$.  As a result, the l.h.s.~of (\ref{gdec}) vanishes due to the orthogonality $V_k[{\cal H}_{\sigma(i)}]\bot V_l[{\cal H}_{\sigma(i)}]$ for $l$ for which $\Gamma_l\ni\sigma(i)$. 
Using (\ref{fdec}) and (\ref{gdec}), we then see
\be
\begin{split}
\braket{\Phi|\Psi}
&=\(\bigotimes_{k=1}^s\bra{\phi_k} A\)\(A\bigotimes_{k=1}^s\ket{\psi_k}\)\\
 &= M\(\bigotimes_{k=1}^s\bra{\phi_k}\)\mathcal{A}\bigotimes_{k=1}^s\ket{\psi_k}\\
 &= M\(\bigotimes_{k=1}^s\bra{\phi_k}\)F\bigotimes_{k=1}^s\ket{\psi_k}+M\(\bigotimes_{k=1}^s\bra{\phi_k}\)G\bigotimes_{k=1}^s\ket{\psi_k}\\
 &= \prod_{k=1}^s\braket{\phi_k|\psi_k}.
 \end{split}
\ee
\end{proof}

\medskip


\medskip

\medskip

Lemma 2 is also useful to show that the set $S(\Gamma, V)$ is closed under the action of the observables in ${\cal C}(\Gamma, V)$, that is, 
\begin{lemma}
{}For any $\ket{\Psi}\in S(\Gamma,V)$ and $O\in{\cal C}(\Gamma,V)$, we have $O\ket{\Psi}\in S(\Gamma, V)$.  More explicitly, if
$\ket{\Psi}$ and $O$ are given by (\ref{asdec}) and (\ref{clde}), respectively, then 
\be
\label{opsi}
O\ket{\Psi}=A\Bot_{k=1}^s O_k\ket{\psi_k} \in S(\Gamma,V).
\ee
\end{lemma}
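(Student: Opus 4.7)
The plan is to unwind $O\ket{\Psi}$ using the spectral form of each $O_k$ and then apply Lemma 2 to collapse the ``inner'' antisymmetrizer into a product of local inner products.

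First I would write
\be
O\ket{\Psi} = A\left(\bigotimes_{k=1}^s O_k\right) A\,\bigotimes_{k=1}^s\ket{\psi_k},
\ee
and insert the spectral decomposition (\ref{ospecw}) of each $O_k$, with eigenvectors $\ket{\mu_k}_{\Gamma_k}\in W(\Gamma_k,V_k)$. Distributing over the tensor product, this turns the middle of the expression into a sum over $(\mu_1,\dots,\mu_s)$ of terms proportional to $\bigotimes_{k=1}^s\ket{\mu_k}_{\Gamma_k}$ multiplied on the right by $\bigl(\bigotimes_{k=1}^s{_{\Gamma_k}}\bra{\mu_k}\bigr)A\bigotimes_{k=1}^s\ket{\psi_k}$.

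The crucial step is to recognize that this last scalar is precisely the object treated by Lemma 2: since both $\ket{\mu_k}_{\Gamma_k}$ and $\ket{\psi_k}$ lie in $W(\Gamma_k,V_k)$ for the same orthogonal set $V$, Lemma 2 (equation (\ref{dpdec})) applies and yields $\prod_{k=1}^s {_{\Gamma_k}}\!\braket{\mu_k|\psi_k}$. Substituting this back and resumming over the $\mu_k$ reassembles the coefficients into $O_k\ket{\psi_k}$ on each factor, giving
\be
O\ket{\Psi}=A\bigotimes_{k=1}^s O_k\ket{\psi_k},
\ee
which is the explicit formula (\ref{opsi}).

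To finish, I would verify that the resulting vector lies in $S(\Gamma,V)$: since $O_k$ has its support in $W(\Gamma_k,V_k)$ by construction of $\mathcal{C}(\Gamma,V)$, each $O_k\ket{\psi_k}$ remains in $W(\Gamma_k,V_k)$, so the anti-symmetrized tensor product matches the definition (\ref{asdec}) of $S(\Gamma,V)$. The main subtlety—and really the only nontrivial input—is ensuring that Lemma 2's orthogonality hypothesis is met; this is automatic because the same $V$ governs both the state space $S(\Gamma,V)$ and the observable class $\mathcal{C}(\Gamma,V)$, so the compatibility $\ket{\mu_k}_{\Gamma_k},\ket{\psi_k}\in W(\Gamma_k,V_k)$ holds by construction. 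No new calculation is required beyond the bookkeeping above.
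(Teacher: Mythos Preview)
Your approach is exactly the one the paper uses: expand $O$ via the spectral decomposition (\ref{ospecw}), invoke Lemma~2 to factorize the inner scalar, resum into $A\bigotimes_k O_k\ket{\psi_k}$, and conclude membership in $S(\Gamma,V)$ from the support condition on $O_k$. One bookkeeping slip to fix: your first display drops an $A$, since $\ket{\Psi}=A\bigotimes_k\ket{\psi_k}$ already carries one, so the scalar you feed into Lemma~2 should be $\bigl(\bigotimes_k{_{\Gamma_k}}\bra{\mu_k}\,A\bigr)\bigl(A\bigotimes_k\ket{\psi_k}\bigr)$ with \emph{two} copies of $A$, which is precisely the form (\ref{dpdec}); with only one $A$ the identity you quote is off by a factor $\sqrt{M(\Gamma)}$.
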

\begin{proof}
Employing the spectral decomposition (\ref{ospecw}), we find 
\be
\begin{split}
O\ket{\Psi}&=A\left(\bigotimes_{l=1}^sO_l\right)A\left(A\bigotimes_{m=1}^s\ket{\psi_m}\right)\\
&=A\left(\bigotimes_{l=1}^s\sum_{\mu_l}\lambda_{\mu_l}\ket{\mu_l}\bra{\mu_l}\right)A\left(A\bigotimes_{m=1}^s\ket{\psi_m}\right)\\
&=A\left[\sum_{\mu_1,\cdots,\mu_s}\(\prod_{l=1}^s\lambda_{\mu_l}\)\(\Bot_{l=1}^s\ket{\mu_l}\)\(\Bot_{l=1}^s\bra{\mu_l}\)\right]A\left(A\bigotimes_{m=1}^s\ket{\psi_m}\right).\\
\end{split}
\ee
With the help of  (\ref{dpdec}), and using the spectral decomposition again and changing the dummy indices in summand, we obtain
\be
\begin{split}
O\ket{\Psi}&=A\sum_{\mu_1,\cdots,\mu_s}\(\prod_{l=1}^s\lambda_{\mu_l}\)\(\bigotimes_{k=1}^s\ket{\mu_k}\)
\left[\left(\bigotimes_{k=1}^s\bra{\mu_k}A\right)\(A\bigotimes_{k=1}^s\ket{\psi_k}\)\right]\\
&=A\sum_{\mu_1,\cdots,\mu_s}\(\prod_{l=1}^s\lambda_{\mu_l}\)\bigotimes_{k=1}^s\ket{\mu_k}\braket{\mu_k|\psi_k}\\
&=A\bigotimes_{k=1}^s\sum_{\mu_k}\lambda_{\mu_k}\ket{\mu_k}\braket{\mu_k|\psi_k}\\
&=A\Bot_{k=1}^sO_k\ket{\psi_k}.
\end{split}
\ee
Since $O_k$ has the support only in $W(\Gamma_k, V_k)$, we observe $O_k\ket{\psi_k}\in W(\Gamma_k, V_k)$ for all $k$, and hence $O\ket{\Psi}\in S(\Gamma,V)$.
\end{proof}

\medskip

Factorizability of weak values can then be established by the help of Lemma 2 and Lemma 3 as
\begin{theorem}
The weak values of $O \in {\cal C}(\Gamma, V)$ under $\ket{\Psi}, \ket{\Phi}\in S(\Gamma, V)$ 
is factorizable according to the partition $\Gamma$,
\be
\frac{\bra{\Phi}O\ket{\Psi}}{\braket{\Phi|\Psi}}=\prod_{k=1}^s\frac{\bra{\phi_k}O_k\ket{\psi_k}}{\braket{\phi_k|\psi_k}}.
\label{decweak}
\ee
\end{theorem}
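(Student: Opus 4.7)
The strategy is to reduce the claim to a direct combination of Lemma 2 and Lemma 3, since these two ingredients together already compute both the numerator and the denominator in factorized form.

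First I would handle the denominator. By Lemma 2 applied to $\ket{\Phi},\ket{\Psi}\in S(\Gamma,V)$, we immediately obtain
\be
\braket{\Phi|\Psi}=\prod_{k=1}^s\braket{\phi_k|\psi_k}.
\ee
This also signals that the weak value is well defined precisely when every factor $\braket{\phi_k|\psi_k}$ is nonzero, so I would mention this implicitly nondegeneracy assumption at the outset.

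Next I would handle the numerator. By Lemma 3, for $O\in\mathcal{C}(\Gamma,V)$ and $\ket{\Psi}\in S(\Gamma,V)$, the action produces
\be
O\ket{\Psi}=A\,\Bot_{k=1}^s O_k\ket{\psi_k},
\ee
with each $O_k\ket{\psi_k}\in W(\Gamma_k,V_k)$, hence $O\ket{\Psi}\in S(\Gamma,V)$. Therefore the pair $\ket{\Phi}$ and $O\ket{\Psi}$ both lie in $S(\Gamma,V)$ and are built from the constituent vectors $\ket{\phi_k}$ and $O_k\ket{\psi_k}$ respectively. Applying Lemma 2 once more (with $\ket{\psi_k}$ replaced by $O_k\ket{\psi_k}$) yields
\be
\bra{\Phi}O\ket{\Psi}=\prod_{k=1}^s\bra{\phi_k}O_k\ket{\psi_k}.
\ee

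Finally, taking the ratio of the two displayed identities factor by factor gives
\be
\frac{\bra{\Phi}O\ket{\Psi}}{\braket{\Phi|\Psi}}=\prod_{k=1}^s\frac{\bra{\phi_k}O_k\ket{\psi_k}}{\braket{\phi_k|\psi_k}},
\ee
which is exactly the claim. There is essentially no obstacle here: the real content of the theorem has already been discharged in Lemma 2 (orthogonality of the subspaces $V_k[\mathcal{H}_i]$ kills all the cross-permutation contributions) and in Lemma 3 (the observable class $\mathcal{C}(\Gamma,V)$ preserves $S(\Gamma,V)$). The only step worth stating carefully is invoking Lemma 3 to ensure $O\ket{\Psi}$ still has the factorized antisymmetric form required by Lemma 2, so that Lemma 2 can be applied to compute the numerator.
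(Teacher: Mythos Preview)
Your proof is correct and follows essentially the same route as the paper: apply Lemma 2 to factorize the denominator, invoke Lemma 3 to place $O\ket{\Psi}$ back in $S(\Gamma,V)$, and then apply Lemma 2 once more to factorize the numerator. The paper's argument is stated more tersely but is identical in substance.
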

\begin{proof}
The factorization of the denominator of the l.h.s. follows from Lemma 2 for  $\ket{\Psi}, \ket{\Phi}\in S(\Gamma, V)$.  
Since $O\ket{\Psi}$ also belongs to $S(\Gamma, V)$ by Lemma 3, again from Lemma 2 the factorization of the numerator of the l.h.s. follows, too. Combining these we find (\ref{decweak}).
\end{proof}

\medskip

As a trivial corollary of the theorem, we note here that the expectation values of observables $O \in {\cal C}(\Gamma, V)$ are factorizable for states in $S(\Gamma, V)$:
\begin{corollary}
The expectation values of $O \in {\cal C}(\Gamma, V)$ under a normalized state $\ket{\Psi}\in S(\Gamma, V)$ 
is factorizable according to the partition $\Gamma$,
\be
 \bra{\Psi}O\ket{\Psi}=\prod_{k=1}^s\bra{\psi_k}O_k\ket{\psi_k}.
\label{decO}
\ee
\end{corollary}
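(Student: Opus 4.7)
The plan is to obtain (\ref{decO}) as the coincidence limit $\ket{\Phi} \to \ket{\Psi}$ of the weak-value factorization (\ref{decweak}) established in Theorem 1. Since $\ket{\Psi}\in S(\Gamma, V)$ is normalized, $\braket{\Psi|\Psi} = 1 \neq 0$, so setting $\ket{\Phi} = \ket{\Psi}$ in (\ref{decweak}) is legitimate and immediately yields
\be
\bra{\Psi} O \ket{\Psi} \;=\; \braket{\Psi|\Psi}\prod_{k=1}^s \frac{\bra{\psi_k} O_k \ket{\psi_k}}{\braket{\psi_k|\psi_k}} \;=\; \prod_{k=1}^s \frac{\bra{\psi_k} O_k \ket{\psi_k}}{\braket{\psi_k|\psi_k}}.
\ee

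To match the exact form (\ref{decO}), I have to dispose of the denominators. For this I invoke Lemma 2 with $\ket{\Phi} = \ket{\Psi}$, which gives $1 = \braket{\Psi|\Psi} = \prod_k \braket{\psi_k|\psi_k}$. The decomposition $\ket{\Psi} = A\bigotimes_k \ket{\psi_k}$ has a trivial rescaling freedom $\ket{\psi_k}\to c_k\ket{\psi_k}$ with $\prod_k c_k = 1$, and each ratio $\bra{\psi_k}O_k\ket{\psi_k}/\braket{\psi_k|\psi_k}$ is manifestly invariant under such rescalings. I may therefore choose the factors so that $\braket{\psi_k|\psi_k} = 1$ for every $k$, which is consistent with the normalization of $\ket{\Psi}$ by the product relation above. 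With each $\ket{\psi_k}$ normalized, all denominators collapse to unity and (\ref{decO}) follows.

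I expect no substantive obstacle: the corollary is essentially the specialization of Theorem 1 to $\ket{\Phi} = \ket{\Psi}$, and the only bookkeeping concerns the normalization of the individual constituent factors, which is handled by the trivial rescaling argument together with Lemma 2.
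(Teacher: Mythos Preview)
Your proposal is correct and follows the same route the paper takes: the paper states Corollary 1 as ``a trivial corollary of the theorem'' (Theorem 1) without a separate proof block, i.e., it is obtained by setting $\ket{\Phi}=\ket{\Psi}$ in (\ref{decweak}). Your explicit normalization bookkeeping via Lemma 2 and the rescaling freedom of the $\ket{\psi_k}$ is a minor elaboration that the paper leaves implicit.
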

\medskip

The factorizability of expectation values according to (arbitrary) partitions $\Gamma$ discussed above generalizes the previous result \cite{GMW02} obtained for the case of bipartitions $s = 2$.
The set  ${\cal C}(\Gamma, V)$ of observables for which the factorizability is seen is also an extension of the set used in \cite{GMW02}. 

It is worth mentioning that the space ${\cal C}(\Gamma, V)$ is closed under multiplication provided that the self-adjointness of the operator is preserved.   Namely, we have
\begin{lemma}
Given two elements $O$, $O' \in {\cal C}(\Gamma, V)$ with
\be
\begin{split}
O= A\left(\bigotimes_{k=1}^sO_k\right)A,
\qquad
O^\prime= A\left(\bigotimes_{k=1}^sO^\prime_k\right)A,
\end{split}
\ee
for which the product $O_k O^\prime_k$ of the two
self-adjoint operators $O_k$ and $O^\prime_k$ is again self-adjoint for all $k$, we have
\be
\label{ooprod}
OO^\prime= A\left(\bigotimes_{k=1}^sO_kO_k^\prime\right)A\in {\cal C}(\Gamma, V).
\ee
\end{lemma}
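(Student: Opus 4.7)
The plan is to compute $OO'$ directly, using $\mathcal{A}^2=\mathcal{A}$ to collapse the middle pair of anti-symmetrizers and then invoking Lemma 2 to evaluate the scalar sandwiches that appear.

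First I would start from $OO' = A(\bigotimes_{k=1}^s O_k)\,AA\,(\bigotimes_{k=1}^s O'_k)\,A$ and use $AA = M(\Gamma)\,\mathcal{A}$ (which follows from $A=\sqrt{M(\Gamma)}\,\mathcal{A}$ and $\mathcal{A}^2=\mathcal{A}$) to rewrite this as $OO' = M(\Gamma)\,A\,(\bigotimes_k O_k)\,\mathcal{A}\,(\bigotimes_k O'_k)\,A$. The task then reduces to simplifying the central operator $M(\Gamma)(\bigotimes_k O_k)\,\mathcal{A}\,(\bigotimes_k O'_k)$.

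Next I would insert the spectral decompositions (\ref{ospecw}) of $O_k$ and $O'_k$, turning the central operator into a double sum over eigenvectors $\ket{\mu_k}_{\Gamma_k}, \ket{\nu_k}_{\Gamma_k}\in W(\Gamma_k,V_k)$, with each term containing the scalar sandwich $(\bigotimes_k{}_{\Gamma_k}\!\bra{\mu_k})\,\mathcal{A}\,(\bigotimes_k\ket{\nu_k}_{\Gamma_k})$. Lemma 2, applied to $\ket{\Phi}=A\bigotimes_k\ket{\mu_k}_{\Gamma_k}$ and $\ket{\Psi}=A\bigotimes_k\ket{\nu_k}_{\Gamma_k}$ (both of which lie in $S(\Gamma,V)$), evaluates the sandwich as $(1/M(\Gamma))\prod_k\braket{\mu_k|\nu_k}$, precisely cancelling the prefactor $M(\Gamma)$. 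Regrouping the surviving double sum site-by-site, the expression factorizes into $\bigotimes_k[\sum_{\mu_k,\nu_k}\lambda_{\mu_k}\lambda'_{\nu_k}\braket{\mu_k|\nu_k}\ket{\mu_k}_{\Gamma_k}{}_{\Gamma_k}\!\bra{\nu_k}] = \bigotimes_k O_k O'_k$, which delivers the desired identity $OO' = A\,(\bigotimes_k O_k O'_k)\,A$.

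Finally, to conclude $OO'\in\mathcal{C}(\Gamma,V)$ rather than merely being of the outer form $A(\bigotimes_k X_k)A$, I would check that each building block $O_k O'_k$ still satisfies (\ref{ospecw}): self-adjointness is given by hypothesis, and the range of $O_k O'_k$ sits inside $W(\Gamma_k,V_k)$ because $O'_k$ maps into $W(\Gamma_k,V_k)$ and $O_k$ preserves it. Hence $O_k O'_k$ admits a spectral decomposition with orthonormal eigenvectors in $W(\Gamma_k,V_k)$ and real eigenvalues, as required for membership in $\mathcal{C}(\Gamma,V)$. The main obstacle I anticipate is purely bookkeeping in the middle step: arranging the double sum so that the Lemma 2 scalar enters cleanly, the $M(\Gamma)$ factors cancel exactly, and the tensor product reassembles site-by-site. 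The underlying reason for this clean collapse is the orthogonality $V_k\bot V_l$ built into the structure $V$, which (via (\ref{gdec})) kills all cross terms in $\mathcal{A}$ that would otherwise mix different subsystems $\Gamma_k$ and $\Gamma_l$.
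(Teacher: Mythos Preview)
Your proposal is correct and follows essentially the same route as the paper: insert the spectral decompositions, apply Lemma~2 (equation~(\ref{dpdec})) to collapse the inner sandwich $(\bigotimes_k\bra{\mu_k})\,A\cdot A\,(\bigotimes_k\ket{\nu_k})$ to $\prod_k\braket{\mu_k|\nu_k}$, regroup site-by-site, and finish by noting that each $O_kO'_k$ is self-adjoint with support in $W(\Gamma_k,V_k)$. Your intermediate step of writing $AA=M(\Gamma)\,\mathcal{A}$ is a minor reshuffling of the same identity the paper uses directly via (\ref{dpdec}), so the $M(\Gamma)$ bookkeeping you flag as the main obstacle is indeed the only delicate point, and it works out exactly as you say.
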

\begin{proof}
With the use of the spectral decomposition, we find
\be
\begin{split}
OO^\prime&= A\left(\bigotimes_{k=1}^sO_k\right)A\cdot A\left(\bigotimes_{l=1}^sO_l^\prime\right)A\\
&=A\(\Bot_{k=1}^s\sum_{\mu_k}\lambda_{\mu_k}\ket{\mu_k}\bra{\mu_k}\)A\cdot A\(\Bot_{k=1}^s\sum_{\nu_k}\lambda_{\nu_k}^\prime\ket{\nu_k}\bra{\nu_k}\)A\\
&=\sum_{\mu_1,\cdots,\mu_s}\sum_{\nu_1,\cdots,\nu_s}\(\prod_{k=1}^s\lambda_{\mu_k}\lambda_{\nu_k}^\prime\)A\(\Bot_{k=1}^s\ket{\mu_k}\)\[\(\Bot_{k=1}^s\bra{\mu_k}A\)\(A\Bot_{k=1}^s\ket{\nu_k}\)\]\(\Bot_{k=1}^s\bra{\nu_k}\)A.
\end{split}
\ee
From (\ref{dpdec}) and using the spectral decomposition again, we obtain
\be
\begin{split}
OO^\prime&=\sum_{\mu_1,\cdots,\mu_s}\sum_{\nu_1,\cdots,\nu_s}\(\prod_{k=1}^s\lambda_{\mu_k}\lambda_{\nu_k}^\prime\)A\(\Bot_{k=1}^s\ket{\mu_k}\)\(\prod_{k=1}^s\braket{\mu_k|\nu_k}\)\(\Bot_{k=1}^s\bra{\nu_k}\)A\\
&=\sum_{\mu_1,\cdots, \mu_s}\sum_{\nu_1,\cdots,\nu_s}\(\prod_{k=1}^s\lambda_{\mu_k}\lambda_{\nu_k}^\prime\)A\(\Bot_{k=1}^s\ket{\mu_k}\)\(\Bot_{k=1}^s\bra{\mu_k}\)\(\Bot_{k=1}^s\ket{\nu_k}\)\(\Bot_{k=1}^s\bra{\nu_k}\)A\\
&=A\(\Bot_{k=1}^s\sum_{\mu_k}\lambda_{\mu_k}\ket{\mu_k}\bra{\mu_k}\)\(\Bot_{k=1}^s\sum_{\nu_k}\lambda_{\nu_k}^\prime\ket{\nu_k}\bra{\nu_k}\)A\\
&=A\left(\bigotimes_{k=1}^sO_kO_k^\prime\right)A.
\end{split}
\ee
Since $O_k$ and $O_k^\prime$ are in the form (\ref{ospecw}), so is the product $O_kO_k^\prime$.  Besides, since the product $O_kO_k^\prime$ is assumed to be self-adjoint for all $k$, we obtain (\ref{ooprod}).
\end{proof}

\medskip

Let $\mathbb{I}_k$ be the projection operator onto $W(\Gamma_k, V_k)$ from ${\cal H}(\Gamma_k)$, {\it i.e.}, the operator (\ref{ospecw}) with $\lambda_\mu = 1$ for all $\mu$, which acts as the identity operator when restricted to $W(\Gamma_k, V_k)$.
Given an $O \in {\cal C}(\Gamma, V)$ in (\ref{clde}) with $O_k$, $k = 1, \ldots, s$, we consider the corresponding observables in ${\cal H}_A^{\rm tot}$ given by
\be
\label{defO(k)}
 O^{(k)}(\Gamma,V) := A\(\bigotimes_{i=1}^{k-1} \mathbb{I}_i\)\otimes O_k \(\bigotimes_{i=k+1}^s \mathbb{I}_i\)A.
\ee
Note that, since $\mathbb{I}_i$ are all self-adjoint, we have $O^{(k)}(\Gamma,V) \in{\cal C}(\Gamma,V)$.   
We then find from Lemma 4 that 
\begin{corollary}
An observable $O\in{\cal C}(\Gamma,V)$ given in  (\ref{clde}) with $O_k$, $k = 1, \ldots, s$, admits the factorization,
\be
O=\prod_{k=1}^s O^{(k)}(\Gamma,V),
\label{decOOk}
\ee
in terms of the observables $O^{(k)}(\Gamma,V)$ for $k = 1, \ldots, s$ defined in (\ref{defO(k)}).
\end{corollary}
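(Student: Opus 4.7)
The plan is to deduce the factorization from Lemma 4 applied $s-1$ times, using the fact that each $O^{(k)}(\Gamma,V)$ is itself built out of single-slot factors ($\mathbb{I}_i$ at slot $i\ne k$ and $O_k$ at slot $k$), all of which are self-adjoint. So I first check that $O^{(k)}(\Gamma,V)\in{\cal C}(\Gamma,V)$ by noting that $\mathbb{I}_i$ is the spectral operator (\ref{ospecw}) with all eigenvalues equal to $1$ and that $O_k$ is of the required form; both have support in $W(\Gamma_k,V_k)$, so the single-slot factors meet the hypothesis of (\ref{clde}).

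Next I compute the product $O^{(1)}(\Gamma,V)\cdot O^{(2)}(\Gamma,V)$ by Lemma 4. The slot-wise products are $O_1\cdot\mathbb{I}_1$ at slot $1$, $\mathbb{I}_2\cdot O_2$ at slot $2$, and $\mathbb{I}_j\cdot\mathbb{I}_j$ for $j\ge 3$. Because $O_k$ has support in $W(\Gamma_k,V_k)$ and $\mathbb{I}_k$ is the orthogonal projector onto that subspace, I have the two key identities
\[
\mathbb{I}_k\,O_k = O_k\,\mathbb{I}_k = O_k,\qquad \mathbb{I}_k^{\,2}=\mathbb{I}_k,
\]
so each slot-wise product equals $O_j$ (for $j\le 2$) or $\mathbb{I}_j$ (for $j\ge 3$), all self-adjoint. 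Hence the hypothesis of Lemma 4 is satisfied and the product again lies in ${\cal C}(\Gamma,V)$.

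The argument now goes by induction on $k$: assuming $O^{(1)}\cdots O^{(k)}(\Gamma,V)=A\bigl(\bigotimes_{j=1}^{k}O_j\otimes\bigotimes_{j=k+1}^{s}\mathbb{I}_j\bigr)A$, multiplying on the right by $O^{(k+1)}(\Gamma,V)$ yields slot-wise products $O_j\mathbb{I}_j=O_j$ for $j\le k$, $\mathbb{I}_{k+1}O_{k+1}=O_{k+1}$ at slot $k+1$, and $\mathbb{I}_j^{\,2}=\mathbb{I}_j$ for $j\ge k+2$. These are all self-adjoint, so Lemma 4 applies again and advances the induction. After $s-1$ steps the $\mathbb{I}_j$'s have all been replaced by $O_j$'s, giving $A\bigl(\bigotimes_{k=1}^{s}O_k\bigr)A=O$, which is (\ref{decOOk}).

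There is no real obstacle: the content is entirely carried by Lemma 4 together with the two identities $\mathbb{I}_k^{\,2}=\mathbb{I}_k$ and $O_k\mathbb{I}_k=\mathbb{I}_k O_k=O_k$. The only point that demands attention is to verify at every inductive step that the slot-wise products remain self-adjoint so that Lemma 4 may be invoked; this is immediate from the above identities since each slot at each stage is either a projector or one of the original self-adjoint $O_k$.
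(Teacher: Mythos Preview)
Your proof is correct and follows the same approach as the paper, which simply states that the corollary follows from Lemma 4 without spelling out the details. Your inductive application of Lemma 4 together with the identities $\mathbb{I}_k O_k = O_k\mathbb{I}_k = O_k$ and $\mathbb{I}_k^2=\mathbb{I}_k$ is exactly the intended argument, just made explicit.
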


\medskip

It is readily seen that for $\ket{\Psi}\in S(\Gamma,V)$ the factorization corresponding to (\ref{decOOk}) also holds at the level of the expectation value:

\begin{proposition}
\label{propdecexp}  
{}For a normalized state $\ket{\Psi}\in S(\Gamma,V)$, the expectation value of an observable $O\in{\cal C}(\Gamma,V)$ in  (\ref{clde}) factorizes as
\be
 \bra{\Psi}O\ket{\Psi}
  =\prod_{k=1}^s \bra{\Psi}O^{(k)}(\Gamma,V)\ket{\Psi},
\label{decOk}
\ee
where $O^{(k)}(\Gamma,V)$, $k = 1, \ldots, s$, are observables given in (\ref{defO(k)}).
\end{proposition}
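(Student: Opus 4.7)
The plan is to prove the proposition by applying Corollary 1 twice and matching the two resulting expressions. Both the left-hand side and each factor on the right-hand side of (\ref{decOk}) are expectation values of observables belonging to ${\cal C}(\Gamma,V)$ evaluated in a state $\ket{\Psi}\in S(\Gamma,V)$, so the factorization of correlations (Corollary 1) is immediately applicable in each case.

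First I would write $\ket{\Psi}=A\bigotimes_{k=1}^s\ket{\psi_k}$ with $\ket{\psi_k}\in W(\Gamma_k,V_k)$, and, without loss of generality, normalize each $\ket{\psi_k}$ individually so that $\braket{\psi_k|\psi_k}=1$; Lemma 2 then ensures $\braket{\Psi|\Psi}=\prod_k\braket{\psi_k|\psi_k}=1$, consistent with the hypothesis that $\ket{\Psi}$ is normalized. Applying Corollary 1 to $O=A\bigl(\bigotimes_{k=1}^s O_k\bigr)A$ gives at once $\bra{\Psi}O\ket{\Psi}=\prod_{k=1}^s\bra{\psi_k}O_k\ket{\psi_k}$.

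Next I would note that each $O^{(k)}(\Gamma,V)$ defined in (\ref{defO(k)}) is itself of the form (\ref{clde}), with its $k$-th slot occupied by $O_k$ and all other slots occupied by the projector $\mathbb{I}_i$ onto $W(\Gamma_i,V_i)$; the self-adjointness of $\mathbb{I}_i$ guarantees $O^{(k)}(\Gamma,V)\in{\cal C}(\Gamma,V)$. A second application of Corollary 1 then yields
\begin{equation}
\bra{\Psi}O^{(k)}(\Gamma,V)\ket{\Psi}=\bra{\psi_k}O_k\ket{\psi_k}\prod_{i\neq k}\bra{\psi_i}\mathbb{I}_i\ket{\psi_i}.
\end{equation}
Because $\ket{\psi_i}\in W(\Gamma_i,V_i)$, the projector $\mathbb{I}_i$ acts as the identity on $\ket{\psi_i}$, and by the chosen normalization the trailing factors all equal unity, so $\bra{\Psi}O^{(k)}(\Gamma,V)\ket{\Psi}=\bra{\psi_k}O_k\ket{\psi_k}$. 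Taking the product over $k$ and comparing with the expression for $\bra{\Psi}O\ket{\Psi}$ closes the argument.

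No serious obstacle is anticipated: the only points requiring a moment of care are the observation that $\mathbb{I}_i$ really is the identity on $W(\Gamma_i,V_i)$ (immediate from its definition as the spectral form (\ref{ospecw}) with all eigenvalues equal to one) and the remark that the per-subsystem normalization is a free choice because $\braket{\Psi|\Psi}=1$ only fixes the product $\prod_k\braket{\psi_k|\psi_k}$. Note that an alternative route via Corollary 2, which gives $O=\prod_k O^{(k)}(\Gamma,V)$ at the operator level, would still leave one needing to justify that the expectation value of the product equals the product of expectation values, so it does not shortcut the argument; the two-fold use of Corollary 1 is more direct.
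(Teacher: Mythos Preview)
Your proposal is correct and follows exactly the approach of the paper's proof, which simply states that applying Corollary 1 to both sides of (\ref{decOk}) yields the same expression, namely the r.h.s.\ of (\ref{decO}). You have merely spelled out in detail what the paper leaves implicit, including the normalization remark and the action of $\mathbb{I}_i$ on $W(\Gamma_i,V_i)$.
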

\begin{proof}
Application of Corollary 1 to the both sides of (\ref{decOk}) yields the same result given by the r.h.s.~of (\ref{decO}). 
\end{proof}
\medskip

\subsection{Criterion of Separability and the Complete Set of Properties}

We now present our criterion of  separability of $N$-particle fermionic systems into arbitrary subsystems:   
\begin{definition}
{\it A normalized $N$-particle antisymmetric pure state $\ket{\Psi}\in{\cal H}_A^{\rm tot}$ is {\rm separable with respect to $(\Gamma, V)$} if there exist 
one-dimensional projection operators $P_k$, $k=1,\cdots,s$,  projecting in $W(\Gamma_k, V_k)$ from ${\cal H}_A(\Gamma_k)$ such that
\be
\bra{\Psi}P^{(k)}(\Gamma, V)\ket{\Psi}=1,
\label{defsep}
\ee
with}
\be
\begin{split}
P^{(k)}(\Gamma, V)
:=A\left(\bigotimes_{i=1}^{k-1}\mathbb{I}_i\right)\otimes P_k\otimes\left(\bigotimes_{i=k+1}^{s}\mathbb{I}_i\right)A.
\end{split}
\label{defPk}
\ee
\end{definition}

\medskip

This definition of separability with respect to $(\Gamma, V)$ is motivated by the fact that it gives a necessary and sufficient condition 
for states to have factorizable inner products among themselves according to the partition $\Gamma$ as well as factorizable expectation values for observables belonging to the class ${\cal C}(\Gamma, V)$.   On account of the uniqueness of ${\cal C}(\Gamma, V)$ assigned to a given pair $(\Gamma, V)$, one can also 
regard that our criterion refers to separability of states with respect to ${\cal C}(\Gamma, V)$.  This may be more natural from the physical viewpoint that correlations are evaluated only with respect to the set of observables under consideration/preparation in advance.   Now, the aforementioned fact motivating the definition is shown by
\begin{theorem}
A normalized $N$-particle antisymmetric pure state $\ket{\Psi}  \in {\cal H}_A^{\rm tot}$ is separable with respect to $(\Gamma, V)$ if and only if  $\ket{\Psi} \in S(\Gamma, V)$. 
\end{theorem}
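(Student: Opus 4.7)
\medskip

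\noindent\textbf{Proof proposal.}
The plan is to prove the two implications separately, leveraging the factorization machinery (Lemma 2, Corollary 1, Lemma 4) already at our disposal. The cornerstone observation is that, for one-dimensional $P_k$ projecting within $W(\Gamma_k,V_k)$, the operator $P^{(k)}(\Gamma,V)$ in (\ref{defPk}) lies in ${\cal C}(\Gamma,V)$ and, by Lemma 4, is itself an orthogonal projector on ${\cal H}_A^{\rm tot}$: indeed $(P^{(k)})^2 = A(\mathbb{I}_1\otimes\cdots\otimes P_k\otimes\cdots\otimes \mathbb{I}_s)A = P^{(k)}$ since $P_k\mathbb{I}_k=P_k$ and $\mathbb{I}_i^2=\mathbb{I}_i$, while self-adjointness is manifest.

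For the \emph{if} direction, assume $\ket{\Psi}=A\bigotimes_{k=1}^s\ket{\psi_k}\in S(\Gamma,V)$. Lemma 2 gives $\braket{\Psi|\Psi}=\prod_k\braket{\psi_k|\psi_k}=1$, so we may rescale the $\ket{\psi_k}$ to be individually normalized. Choose $P_k:=\ket{\psi_k}\bra{\psi_k}$, which is a one-dimensional projector in $W(\Gamma_k,V_k)$. Since $P^{(k)}(\Gamma,V)\in{\cal C}(\Gamma,V)$, Corollary 1 yields
\begin{equation}
\bra{\Psi}P^{(k)}(\Gamma,V)\ket{\Psi}=\bra{\psi_k}P_k\ket{\psi_k}\prod_{l\neq k}\bra{\psi_l}\mathbb{I}_l\ket{\psi_l}
=|\braket{\psi_k|\psi_k}|^{2}\prod_{l\neq k}\braket{\psi_l|\psi_l}=1,
\end{equation}
exactly as required by (\ref{defsep}).

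For the \emph{only if} direction, suppose (\ref{defsep}) holds with $P_k=\ket{\phi_k}\bra{\phi_k}$, $\ket{\phi_k}\in W(\Gamma_k,V_k)$ normalized. Because $P^{(k)}(\Gamma,V)$ is an orthogonal projector and $\bra{\Psi}(\mathbbm{1}-P^{(k)}(\Gamma,V))\ket{\Psi}=0$ with $\mathbbm{1}-P^{(k)}(\Gamma,V)\ge 0$, it follows that $P^{(k)}(\Gamma,V)\ket{\Psi}=\ket{\Psi}$ for every $k$. Now apply Lemma 4 iteratively to the product $\prod_{k=1}^sP^{(k)}(\Gamma,V)$: on each slot the relevant multiplication is $\mathbb{I}_i\cdot P_i=P_i$ (or $\mathbb{I}_i\cdot\mathbb{I}_i=\mathbb{I}_i$), each of which is self-adjoint, so
\begin{equation}
\prod_{k=1}^sP^{(k)}(\Gamma,V)
= A\Bigl(\bigotimes_{k=1}^s P_k\Bigr)A
= A\Bigl(\bigotimes_{k=1}^s\ket{\phi_k}\Bigr)\Bigl(\bigotimes_{k=1}^s\bra{\phi_k}\Bigr)A.
\end{equation}
Applying both sides to $\ket{\Psi}$ and using $P^{(k)}(\Gamma,V)\ket{\Psi}=\ket{\Psi}$ repeatedly gives
\begin{equation}
\ket{\Psi}=c\,A\bigotimes_{k=1}^s\ket{\phi_k},
\qquad c:=\Bigl(\bigotimes_{k=1}^s\bra{\phi_k}\Bigr)A\ket{\Psi}\in\mathbb{C}.
\end{equation}
Normalization of $\ket{\Psi}$ forces $c\neq 0$, and absorbing $c$ into (say) $\ket{\phi_1}$ produces the required decomposition $\ket{\Psi}=A\bigotimes_k\ket{\psi_k}$ with $\ket{\psi_k}\in W(\Gamma_k,V_k)$, i.e.\ $\ket{\Psi}\in S(\Gamma,V)$.

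The main obstacle, to my mind, is the \emph{only if} direction, and specifically the step of turning the $s$ separate one-dimensional projection conditions into the single product identity $A(\bigotimes_kP_k)A\,\ket{\Psi}=\ket{\Psi}$. This is where Lemma 4 is indispensable — without the observation that ${\cal C}(\Gamma,V)$ is closed under multiplication (when self-adjointness of the slot-wise products is preserved), one could not collapse the $s$ rank-one constraints into a single rank-one expression that exposes the factorized form. The \emph{if} direction and the projector/idempotency verification are essentially mechanical applications of Corollary 1 and Lemma 4.
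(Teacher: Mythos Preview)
Your proof is correct and follows essentially the same route as the paper: the \emph{if} direction via Corollary 1 with $P_k=\ket{\psi_k}\bra{\psi_k}$, and the \emph{only if} direction by first deducing $P^{(k)}\ket{\Psi}=\ket{\Psi}$ from the projector property of $P^{(k)}$, then collapsing the product $\prod_k P^{(k)}$ to $A(\bigotimes_k P_k)A$ via Lemma 4 (the paper phrases this as Corollary 2) to obtain the rank-one expression. The only cosmetic difference is that you obtain $P^{(k)}\ket{\Psi}=\ket{\Psi}$ from the positivity of $\mathbbm{1}-P^{(k)}$, whereas the paper writes $P^{(k)}\ket{\Phi}=\ket{\Phi}+\ket{\Omega_k}$ and kills $\ket{\Omega_k}$ by computing $\bra{\Phi}(P^{(k)})^2\ket{\Phi}$; these are the same argument in different clothing.
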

\begin{proof}
Suppose $\ket{\Psi}\in S(\Gamma, V)$.  Applying (\ref{decO}) of Corollary 1 to $P^{(k)} = P^{(k)}(\Gamma, V)$ in (\ref{defPk}), we obtain
\be
\bra{\Psi}P^{(k)}\ket{\Psi}=\bra{\psi_k}P_k\ket{\psi_k}.
\label{Pdef}
\ee
Setting 
$P_k=\ket{\psi_k}\bra{\psi_k}$ for all $k$, we see
\be
\bra{\Psi}P^{(k)}\ket{\Psi}=1, \qquad k = 1, \ldots, s.
\ee

Conversely, suppose that there exists a state $\ket{\Phi}\in{\cal H}_A^{\rm tot}$ fulfilling  (\ref{defsep}).  We then have
\be
P^{(k)}\ket{\Phi}=\ket{\Phi}+\ket{\Omega_k}
\label{PhiOmega}
\ee
for some state $\ket{\Omega_k}\in{\cal H}_A^{\rm tot}$ with
\be
\braket{\Phi|\Omega_k}=0.
\label{POortho}
\ee
Since $P^{(k)}$ is a projector, using (\ref{PhiOmega}) and (\ref{POortho}),  we obtain
\be
1=\bra{\Phi}P^{(k)}\ket{\Phi}
=\bra{\Phi}(P^{(k)})^2\ket{\Phi}
=\(\bra{\Phi}+\bra{\Omega_k}\)\(\ket{\Phi}+\ket{\Omega_k}\)
=1+\braket{\Omega_k|\Omega_k},
\ee
from which we find $\ket{\Omega_k}=0$.  This implies
$P^{(k)}\ket{\Phi}=\ket{\Phi}$ for all $k$, and thus if we consider the product of the projectors,
\be
\label{pprod}
P := \prod_{k=1}^s P^{(k)} =A\(\Bot_{k=1}^sP_k\)A,
\ee
we find
\be
P\,\ket{\Phi} =  \prod_{k=1}^sP^{(k)}\ket{\Phi}=\ket{\Phi}.
\label{multiphi}
\ee
Note that since all $P^{(k)}$ belong to ${\cal C}(\Gamma, V)$ defined in
(\ref{clde}), so does the product $P \in {\cal C}(\Gamma, V)$ by Corollary 2.
On the other hand, since the operator $P_k$ in $P^{(k)}$ is a
one-dimensional projector in $W(\Gamma_k,V_k)$, one can put $P_k =
\ket{\xi_k}\bra{\xi_k}$ for some normalized $\ket{\xi_k}\in
W(\Gamma_k,V_k)$.   Then, recalling (\ref{pprod}), we have
 \be
P =
A\left(\bigotimes_{k=1}^s\ket{\xi_k}\bra{\xi_k}\right)A
=\ket{\Xi}\bra{\Xi},
 \label{phisep}
\ee
where
 \be
 \ket{\Xi}:=A \bigotimes_{k=1}^s\ket{\xi_k} \in S(\Gamma, V).
 \label{PinS}
\ee
Combining (\ref{phisep}) with (\ref{multiphi}), we arrive at
\be
\ket{\Phi}=\braket{\Xi|\Phi}\ket{\Xi}.
\label{PP}
\ee
which shows $\ket{\Phi}\in S(\Gamma, V)$.
\end{proof}

\medskip

We learn from (\ref{PP}) that $\ket{\Phi}$ is equivalent to $\ket{\Xi}$ in (\ref{PinS}) up to the global phase $\braket{\Xi|\Phi}$.  It then follows that there exists a one-to-one correspondence between a projector $P$ and a state in $S(\Gamma, V)$ , and if $\ket{\Phi}$ and $P$ are in such a correspondence, we have $\bra{\Phi} P \ket{\Phi}= 1$.   
Concerning the ontological aspect of our separability criterion, Theorem 2 implies that separable states
under the partition $\Gamma$ consists of states $\ket{\xi_k} \in W(\Gamma_k,V_k)$, $k = 1, \ldots, s$, each possessing 
the \lq complete set of properties\rq\ in the subsystem $\Gamma_k$ as mentioned in \cite{GMW02}.   
In fact, the measurement outcome $\bra{\Phi} P \ket{\Phi} = 1$ (or (\ref{defsep})) ensures that the possession of the properties specified by $P_k$ can be confirmed with certainty to the set of number $\vert \Gamma_k \vert$ of particles, not to the individual particles in each of the subsystems $\Gamma_k$, in accordance with the indistinguishability of the particles.  Note that this holds simultaneously for all $k$, since $P^{(k)}(\Gamma, V)$ commute with each other among themselves.

We point out that it is possible to regard our criterion (Definition 1) as a generalization of the criterion given in \cite{GMW02} for the bipartition $s = 2$ case in which only one projection (rather than $s=2$ projections) is used.   Indeed, we see that for the arbitrary $s$-partite case our criterion can actually be replaced by a slightly more economical one which uses $(s-1)$ projections, that is,
\begin{proposition}
A normalized $N$-particle antisymmetric pure state $\ket{\Psi}\in {\cal H}_A^{\rm tot}$ is separable with respect to $(\Gamma,V)$ if and only if there exist $(s-1)$ one-dimensional projection operators $P_k$  in $W(\Gamma_k,V_k)$ from ${\cal H}_A(\Gamma_k)$ for $k=1,\cdots, t-1, t+1,\cdots,s$ with an integer $t \in [1, s]$, such that
(\ref{defsep}) are fulfilled with (\ref{defPk}).
\end{proposition}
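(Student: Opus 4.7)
The necessity (``only if'') direction is immediate: if $\ket{\Psi}$ is separable with respect to $(\Gamma,V)$, then by Definition~1 all $s$ projectors $P_k$ exist satisfying (\ref{defsep}), and in particular any $(s-1)$ of them do.

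For the sufficiency (``if'') direction, the plan is to mimic the structure of the proof of Theorem~2. Suppose that $P_k$, $k \in \{1,\dots,s\} \setminus \{t\}$, are one-dimensional projectors in $W(\Gamma_k,V_k)$ such that $\bra{\Psi}P^{(k)}(\Gamma,V)\ket{\Psi}=1$ for all $k\neq t$. First I would run the same projector/orthogonality argument used in Theorem~2 to conclude $P^{(k)}(\Gamma,V)\ket{\Psi}=\ket{\Psi}$ for every $k\neq t$. Since the operators $P^{(k)}(\Gamma,V)$ pairwise commute (each is a self-adjoint element of ${\cal C}(\Gamma,V)$ and Lemma~4 guarantees their products are symmetric in $k$), their product $Q := \prod_{k\neq t}P^{(k)}(\Gamma,V)$ is itself a projector, and $Q\ket{\Psi}=\ket{\Psi}$.

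Next, using Corollary~2 (or directly Lemma~4 applied $s-2$ times, with $\mathbb{I}_t$ playing the role of the identity in the $t$-th slot), I would identify $Q$ explicitly as
\be
Q = A\Bigl(\bigotimes_{k\neq t} P_k \,\otimes\, \mathbb{I}_t \Bigr)A.
\ee
Writing $P_k = \ket{\xi_k}\bra{\xi_k}$ with $\ket{\xi_k}\in W(\Gamma_k,V_k)$ for $k\neq t$, the image of $Q$ is then, by inspection, the linear span of states of the form $A\,\bigl(\bigotimes_{k\neq t}\ket{\xi_k}\bigr)\otimes\ket{\phi}$ with $\ket{\phi}\in W(\Gamma_t,V_t)$. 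Because $Q\ket{\Psi}=\ket{\Psi}$, this forces
\be
\ket{\Psi} = A\,\bigotimes_{k=1}^{s}\ket{\xi_k},
\ee
for some $\ket{\xi_t}\in W(\Gamma_t,V_t)$ (the component picked out by $\mathbb{I}_t$), which is precisely the statement $\ket{\Psi}\in S(\Gamma,V)$. Finally, Theorem~2 furnishes the missing projector $P_t:=\ket{\xi_t}\bra{\xi_t}$ and verifies Definition~1.

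The one step that requires care is the explicit identification of the image of $Q$: it must be argued that inserting $\mathbb{I}_t$ in the $t$-th slot leaves enough freedom to absorb an arbitrary $\ket{\xi_t}\in W(\Gamma_t,V_t)$, while the remaining rank-one factors pin down $\ket{\xi_k}$ uniquely up to phase for $k\neq t$. Lemma~2 (factorization of inner products on $S(\Gamma,V)$) is the convenient tool here, since it shows that states $A\,(\bigotimes_{k\neq t}\ket{\xi_k})\otimes\ket{\phi}$ with varying $\ket{\phi}$ span the range of $Q$ and are linearly parametrized by $\ket{\phi}$. Beyond this point the argument is routine, and no step requires new tools beyond Lemmas 2--4, Corollaries 1--2, and Theorem~2.
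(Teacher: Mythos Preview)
Your proposal is correct and follows essentially the same route as the paper's proof: form the product $Q=\prod_{k\neq t}P^{(k)}$, identify it via Lemma~4 as $A\bigl(\bigotimes_{k\neq t}P_k\otimes\mathbb{I}_t\bigr)A$, and expand $\mathbb{I}_t$ to see that $Q\ket{\Psi}=\ket{\Psi}$ forces $\ket{\Psi}\in S(\Gamma,V)$. The paper carries out the image identification by writing $\mathbb{I}_t=\sum_{\mu_t}\ket{\mu_t}\bra{\mu_t}$ in an orthonormal basis of $W(\Gamma_t,V_t)$ and reading off $\ket{\xi_t}=\sum_{\mu_t}\braket{\Xi_{\mu_t}|\Psi}\ket{\mu_t}$ directly, which is the concrete version of the step you flagged as requiring care.
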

\begin{proof}
Consider the product of $(s-1)$ projectors,
\be
 P' :=  \(\prod_{k=1}^{t-1} P^{(k)}\) \(\prod_{k=t+1}^{s} P^{(k)} \) = A\(\(\bigotimes_{k=1}^{t-1}P_k\)\otimes \mathbb{I}_t\otimes\(\bigotimes_{k=t+1}^s P_k\)\)A.
\ee
$P'$ is a projection operator because, from Lemma 4, we have
\be
 P'^2 = A\(\(\bigotimes_{k=1}^{t-1}P_k^2\)\otimes \mathbb{I}_t^2\otimes\(\bigotimes_{k=t+1}^s P_k^2\)\)A
  =  A\(\(\bigotimes_{k=1}^{t-1}P_k\)\otimes \mathbb{I}_t\otimes\(\bigotimes_{k=t+1}^s P_k\)\)A
  = P'.
\ee
The same argument used to show (\ref{multiphi}) then implies that, if a state $\ket{\Psi}\in {\cal H}_A^{\rm tot}$ fulfills (\ref{defsep}) for $k \ne t$, we have
\be
 P'\ket{\Psi} = \ket{\Psi}.
\ee
Since $P_k$ is a one-dimensional projection operator in $W(\Gamma_k,V_k)$,
one can put $P_k=\ket{\xi_k}\bra{\xi_k}$ for some normalized $\ket{\xi_k}\in W(\Gamma_k,V_k)$.
Using any normalized basis $\{\ket{\mu_t}\}$ in $W(\Gamma_t,V_t)$, we also have 
$ \mathbb{I}_{t}= \sum_{\mu_t}\ket{\mu_t}\bra{\mu_t}$.
Then $P'$ can be written as
\be
 P' = A\(\(\bigotimes_{k=1}^{t-1}\ket{\xi_k}\bra{\xi_k}\)\otimes \(\sum_{\mu_t}\ket{\mu_t}\bra{\mu_t}\)\otimes\(\bigotimes_{k=t+1}^s \ket{\xi_k}\bra{\xi_k}\)\)A
  = \sum_{\mu_t}\ket{\Xi_{\mu_t}}\bra{\Xi_{\mu_t}},
\ee
where
\be
 \ket{\Xi_{\mu_t}}= A\( \(\bigotimes_{k=1}^{t-1}\ket{\xi_k}\)\otimes \ket{\mu_t}\otimes\(\bigotimes_{k=t+1}^s \ket{\xi_k}\)\).
\ee
It follows that
\be
\begin{split}
 \ket{\Psi} 
 &= P'\ket{\Psi}\\
 &= \sum_{\mu_t}\braket{\Xi_{\mu_t}|\Psi}\ket{\Xi_{\mu_t}}\\
 &= A\( \(\bigotimes_{k=1}^{t-1}\ket{\xi_k}\)\otimes\(\sum_{\mu_t}\braket{\Xi_{\mu_t}|\Psi} \ket{\mu_t}\)\otimes\(\bigotimes_{k=t+1}^s \ket{\xi_k}\)\)\in S(\Gamma,V).
\end{split}
\ee
The converse is trivial.
\end{proof}

The criterion for the full separation, where we have $s = N$ and $\Gamma_k = \{k\}$ for $k = 1, 2, \ldots, N$,  is also given in \cite{GMW02} 
with $N$ projectors (see (\ref{epcond})  below) in a way slightly different from the bipartition $s=2$ case.   
However, again, the equivalence to our criterion in that case can be established by

\begin{proposition}
An $N$-particle antisymmetric pure state $\ket{\Psi} \in \mathcal{H}_A^{\rm tot}$ is fully separable if and only if there exist one-dimensional projection operators $P_i$, $i=1,2, \cdots,N$, which are commonly defined in each constituent Hilbert space $\mathbb{C}^d$ and are mutually orthogonal  $P_iP_j=0$ for $i\neq j$ such that
\be
 \label{epcond}
 \bra{\Psi}E^{(i)} \ket{\Psi} = 1,
\ee
with
\be
 E^{(i)} = \mathbbm{1}^{\ot N} - \(\mathbbm{1}-P_i\)^{\ot N}.
\ee
\end{proposition}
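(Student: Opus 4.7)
The plan is to prove both directions by translating between the criterion of Definition~1 (with $\ket{\Psi}\in S(\Gamma,V)$ via Theorem~2) and a direct spectral analysis of the tensor-product projectors $(\mathbbm{1}-P_i)^{\otimes N}$ appearing in $E^{(i)}$.

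For the ``only if'' direction I would start from a fully separable state, which by Theorem~2 has the form $\ket{\Psi}=A\bigotimes_{k=1}^N\ket{\psi_k}$ with $\ket{\psi_k}\in V_k\subset\mathbb{C}^d$ and the $V_k$ mutually orthogonal by (\ref{defpart}). Setting $P_k:=\ket{\psi_k}\bra{\psi_k}$ furnishes mutually orthogonal one-dimensional projectors on $\mathbb{C}^d$, because $V_k\perp V_l$. The condition $\bra{\Psi}E^{(i)}\ket{\Psi}=1$ is equivalent, by normalization of $\ket{\Psi}$ together with the projector nature of $(\mathbbm{1}-P_i)^{\otimes N}$, to the vanishing condition $(\mathbbm{1}-P_i)^{\otimes N}\ket{\Psi}=0$. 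But $(\mathbbm{1}-P_i)^{\otimes N}$ is invariant under particle permutations and hence commutes with $A$; after pushing it through $A$, the $i$-th tensor factor becomes $(\mathbbm{1}-P_i)\ket{\psi_i}=0$ and the whole vector vanishes.

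For the ``if'' direction I would introduce the complementary projector $P_0:=\mathbbm{1}-\sum_{i=1}^N P_i$ on $\mathbb{C}^d$, yielding a complete orthogonal resolution $\{P_0,P_1,\ldots,P_N\}$ of the identity. Writing $P_i=\ket{\xi_i}\bra{\xi_i}$ and expanding $\mathbbm{1}^{\otimes N}$ gives the orthogonal decomposition
\begin{equation}
\ket{\Psi}=\sum_{\sigma:\{1,\ldots,N\}\to\{0,1,\ldots,N\}}\bigl(P_{\sigma(1)}\otimes\cdots\otimes P_{\sigma(N)}\bigr)\ket{\Psi}.
\end{equation}
Each assumption $(\mathbbm{1}-P_i)^{\otimes N}\ket{\Psi}=0$, when expanded using $\mathbbm{1}-P_i=\sum_{\alpha\ne i}P_\alpha$, is a sum over those $\sigma$ whose image avoids $i$; the mutual orthogonality of the corresponding subspaces forces each such term to vanish individually. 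Imposing this simultaneously for every $i=1,\ldots,N$ leaves only $\sigma$ whose image contains all of $\{1,\ldots,N\}$, and with domain of cardinality $N$ such $\sigma$ must be permutations in $S_N$.

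Hence $\ket{\Psi}$ lies in the $N!$-dimensional subspace $\mathrm{span}\{\ket{\xi_{\sigma(1)}}\otimes\cdots\otimes\ket{\xi_{\sigma(N)}}:\sigma\in S_N\}$, whose intersection with $\mathcal{H}_A^{\rm tot}$ is the one-dimensional line spanned by the Slater determinant $\ket{\Xi}:=A\ket{\xi_1}\otimes\cdots\otimes\ket{\xi_N}$. Antisymmetry of $\ket{\Psi}$ therefore gives $\ket{\Psi}=c\,\ket{\Xi}$ with $|c|=1$, placing $\ket{\Psi}$ in $S(\Gamma,V)$ for the full partition $\Gamma=\{\{1\},\ldots,\{N\}\}$ and the orthogonal structure $V=\{\mathrm{range}\,P_1,\ldots,\mathrm{range}\,P_N\}$, and full separability follows by Theorem~2. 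The main obstacle is the combinatorial collapse step in the reverse direction, where the $N$ individual conditions have to be combined to confine the support of $\ket{\Psi}$ to the permutation subspace; the remainder is bookkeeping with $A$ and the permutation invariance of $(\mathbbm{1}-P_i)^{\otimes N}$.
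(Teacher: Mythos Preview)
Your argument is correct but proceeds along a different route from the paper's. The paper works at the operator level: it expands $E^{(i)}$ and computes the product $E:=\prod_{i=1}^N E^{(i)}$ directly, using the orthogonality $P_iP_j=0$ to collapse it to $\sum_{\sigma\in S_N}\pi_\sigma\bigl(\bigotimes_i P_i\bigr)\pi_\sigma$, and then shows $\mathcal{A}E\mathcal{A}=P$ with $P$ the rank-one projector of (\ref{pprod}); from $E^{(i)}\ket{\Psi}=\ket{\Psi}$ one gets $P\ket{\Psi}=\ket{\Psi}$, feeding straight back into the machinery already established in the proof of Theorem~2. Your approach instead works at the state level: you resolve $\ket{\Psi}$ against the orthogonal family $\{P_{\sigma(1)}\otimes\cdots\otimes P_{\sigma(N)}\}$ and use each condition $(\mathbbm{1}-P_i)^{\otimes N}\ket{\Psi}=0$ to kill, by orthogonality, every component whose index map misses $i$, so that only genuine permutations survive; antisymmetry then pins $\ket{\Psi}$ to the Slater determinant. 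The two are equivalent in spirit---both amount to the observation that ``every $P_i$ must occur in some slot''---but the paper's version reuses the product projector $P$ and its earlier lemmas, while yours is self-contained and arguably more transparent combinatorially (and also makes the ``only if'' direction explicit rather than leaving it as ``trivial'').
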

\begin{proof}
We first expand $E^{(i)}$ as 
\be
 E^{(i)} 
 =-\sum_{\sigma\in S_N}\sum_{k=1}^N\frac{(-1)^{k}}{N!}{N\choose k}\pi_\sigma\(P_i^{\ot k}\ot\mathbbm{1}^{\ot(N-k)}\)\pi_\sigma,
\ee
and thereby consider their product $\prod_{i=1}^N E^{(i)}$.  
Since each term in $E^{(i)}$ has at least one projector $P_i$ in the constituents, and since
they are mutually orthogonal $P_iP_j=0$ for $i\neq j$, we find
\be
E := \prod_{i=1}^N E^{(i)} = \sum_{\sigma\in S_N}\pi_\sigma\(\Bot_{i=1}^N P_i\) \pi_\sigma.
\ee
On account of (\ref{Api}), we obtain
\be
 \label{eaident}
 \mathcal{A} E \mathcal{A} 
  = \sum_{\sigma\in S_N}\mathcal{A}\pi_\sigma\(\Bot_{i=1}^N P_i\) \pi_\sigma \mathcal{A}
  = \sum_{\sigma\in S_N}\mathcal{A}\(\Bot_{i=1}^N P_i\) \mathcal{A}
  = N!\, \mathcal{A}\(\Bot_{i=1}^N P_i\) \mathcal{A} = P,
\ee
where $P$ is the product defined in (\ref{pprod}) whose multinomial coefficient (\ref{cardm}) is now $M(\Gamma) = N!$.

Suppose that a state $\ket{\Psi}\in {\cal H}_A^{\rm tot}$ fulfills (\ref{epcond}) for all $i$.  Since $E^{(i)}$ are all projectors,
employing again the same argument to reach (\ref{multiphi}) we find
\be
 E\ket{\Psi} = \ket{\Psi}.
\ee
It then follows from (\ref{eaident}) and $\mathcal{A} \ket{\Psi} = \ket{\Psi}$ that
\be
P \ket{\Psi} = \mathcal{A} E \mathcal{A} \ket{\Psi} = \mathcal{A} E \ket{\Psi} = \ket{\Psi}.
\ee
This is just the condition (\ref{multiphi}), and hence we arrive at the same conclusion $\ket{\Psi}\in S(\Gamma, V)$ as before.   The converse is trivial.
\end{proof}

\begin{figure}
\includegraphics[height=1.4in]{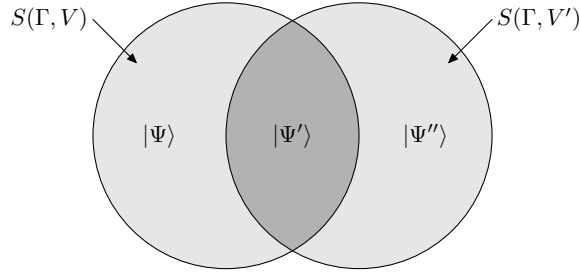}\\
\caption{%
A state may be separable with respect to more than one orthogonal structure under an identical partition $\Gamma$ and, as a result, the property of separability may not be transitive.   Here, both of the states $\ket{\Psi}$ and $\ket{\Psi^\prime}$ belong to $S(\Gamma, V)$ and hence they are mutually separable ({\it e.g.}, their inner product is factorizable).  Similarly,  
$\ket{\Psi^\prime}$ and $\ket{\Psi^{\prime\prime}}$ belong to $S(\Gamma, V')$, and they are also mutually separable.  
However, the two states $\ket{\Psi}$ and $\ket{\Psi^{\prime\prime}}$ are not necessarily separable, 
unless they share a common orthogonal structure.
}
\label{figone}
\end{figure}

A notable feature of our criterion is that it renders explicit the dependence of separability 
on the orthogonal structure $V$ used.    An important consequence of this is that such a structure may not be assigned to a state uniquely, that is, 
a single state $\ket{\Psi}$ may admit more than two orthogonal structures under a given partition $\Gamma$.
Moreover, the very separable nature of states is not necessarily transitive, that is,
even if $\ket{\Psi}$ and $\ket{\Psi^\prime}$, and also $\ket{\Psi^\prime}$ and $\ket{\Psi^{\prime\prime}}$, are separable between them sharing the same orthogonal structures, respectively, $\ket{\Psi}$ and $\ket{\Psi^{\prime\prime}}$ may not be separable, unless they share the same orthogonal structure (see Figure \ref{figone}).
These properties will be seen more explicitly by examples in the next section.  

Prior to this, however, we add
\begin{definition}
{\it An $N$-particle antisymmetric pure state $\ket{\Psi}\in{\cal H}_A^{\rm tot}$ is {\rm separable} if it is separable
with respect to $(\Gamma, V)$ for some partition $\Gamma$ and set of orthogonal spaces $V$.}
\end{definition}
This form of criterion will be more convenient when one is working in situations where the dependence of $V$ is immaterial and only the possibility of separability of states is of concern.

\section{Implications of the orthogonal structures: examples}
\setcounter{equation}{0}  

In this section, related to the orthogonal structures encountered in our criterion, we elaborate on two technical points 
which require particular attention in the discussion of separability.  These are non-uniqueness of the space $S(\Gamma, V)$ that can be assigned to a factorizable state,  and non-transitivity of factorizability among individual factorizable states.   We also touch upon the physical role of the orthogonal structures concerning the cluster decomposition property.   For definiteness, in this section we illustrate  these points by examples using only normalized states both for the total system and the subsystems.

\subsection{Non-uniqueness}

In Sect.~III we mentioned that a state may belong to $S(\Gamma, V)$ for more than one $V$.   To see this,  
consider a three particle $N=3$ state with $d=4$,
\be
\ket{\Psi}=A\[\frac{1}{\sqrt{2}}\(\ket{0}_1\ket{1}_2-\ket{1}_1\ket{0}_2\)\, \ket{3}_3\],
\ee
where we employ abbreviated notations such as  $\ket{0}_1\ket{1}_2$ in place of $\ket{0}_1\ot\ket{1}_2$. Clearly,  the state $\ket{\Psi}$ belongs to $S(\Gamma, V)$ with
 \be
 \label{gmone}
 \Gamma=\{\Gamma_1, \Gamma_2\},
 \qquad
 \Gamma_1=\{1,2\},
 \qquad
 \Gamma_2=\{3\}
 \ee
and
\be
 V=\{V_1, V_2\},
 \qquad
 V_1={\rm span}\{\ket{0}, \ket{1}\},
 \qquad
 V_2={\rm span}\{\ket{2}, \ket{3}\}.
\ee
On the other hand, $\ket{\Psi}$ also belongs to $S(\Gamma, V^\prime)$ with the same $\Gamma$ in (\ref{gmone}) and
\be
 V^\prime=\{V^\prime_1, V^\prime_2\},
 \qquad
 V_1^\prime={\rm span}\{\ket{0}, \ket{1}, \ket{2}\},
 \qquad
 V_2^\prime={\rm span}\{\ket{3}\}.
\ee
Consequently, the state $\ket{\Psi}$ is separable with respect to both $(\Gamma, V)$ and $(\Gamma, V^\prime)$.

Note that in the above example the two orthogonal structures $V$ and $V'$ are considered under the same basis vectors $\{\ket{a}_i\}$.  
However, this is not necessary as can be seen in the familiar case of the singlet state,
 \be
 \begin{split}
 \ket{\Psi}&=A\(\ket{0}_1\ket{1}_2\)=\frac{1}{\sqrt{2}}\(\ket{0}_1\ket{1}_2-\ket{1}_1\ket{0}_2\)
 \end{split}
 \ee
in a two particle $N=2$ system with $d=2$.   Indeed, we see immediately that $ \ket{\Psi} \in S(\Gamma, V)$ for
  \be
 \Gamma=\{\Gamma_1, \Gamma_2\},
 \qquad
 \Gamma_1=\{1\},
 \qquad
 \Gamma_2=\{2\}
 \label{partex}
 \ee
and infinitely many sets of spaces $V(\theta,\phi)=\{V_1(\theta,\phi), V_2(\theta,\phi)\}$ with
 \be
V_1(\theta,\phi)=
 {\rm span}\{
 \cos\theta\ket{0}+\e^{\im\phi}\sin\theta\ket{1}\},
  \qquad
  V_2(\theta,\phi)=
{\rm span}\{
  \sin\theta\ket{0}-\e^{-\im\phi}\cos\theta\ket{1}
\}
\ee
parametrized by angles  $0\le\theta\le \pi$ and $0\le\phi<2\pi$.  Thus, the singlet state $\ket{\Psi}$ is separable with respect to all such $(\Gamma, V(\theta,\phi))$.
However, this ambiguity is deceptive because the singlet state $\ket{\Psi}$ is the only element spanning the total Hilbert space ${\cal H}_A^{\rm tot}$ in this case, and hence the resultant space $S(\Gamma, V(\theta,\phi))$ is actually unique.   Accordingly, all observables in ${\cal C}(\Gamma, (\theta,\phi))$, or even those in ${\cal C}(\Gamma)$, are proportional to the one dimensional projector $\ket{\Psi}\bra{\Psi}$.  For example, from (\ref{clde}) one finds that any observable in ${\cal C}(\Gamma, V(0,0))$ is written as
 \be
 O=A(k_1\ket{0}_1{_1}\bra{0}\otimes k_2\ket{1}_2{_2}\bra{1})A=k_1k_2\ket{\Psi}\bra{\Psi},
 \qquad
 k_1,k_2\in\R,
 \label{Osinglet}
 \ee
 which is obviously the case for all ${\cal C}(\Gamma,V(\theta,\phi))$.

At this point, we remark on the outcome of joint spin measurements of two particles obtained by the
spin operator $\sigma(\vect{a}, \vect{b}) := \sigma_1(\vect{a})\ot\sigma_2(\vect{b})$ where $\sigma_1(\vect{a})=\vect{a}\cdot {\bf \sigma}_1$ and $\sigma_2(\vect{b})=\vect{b}\cdot \sigma_2$ are
defined from normalized real $3$-vectors $\vect{a}$, $\vect{b}$  and Pauli matrices ${\sigma}_i = (\sigma_i^x, \sigma_i^y, \sigma_i^z)$ for $i = 1, 2$.
Since $\sigma(\vect{a}, \vect{b})$ acts on 
${\cal H}^{\rm tot} = {\cal H}_1\otimes {\cal H}_2$ but not on ${\cal H}_A^{\rm tot}$, for the fermionic system we may consider the corresponding operator  ${\cal A}\,\sigma(\vect{a} , \vect{b})\,{\cal A}$ which is in the form (\ref{defO}) and belongs to ${\cal C}(\Gamma)$.
However, it is important to notice
that, if we are to regard $O_1 = {\bf \sigma}_1(\vect{a})$ and $O_2 = \sigma_2(\vect{b})$ as a set of operators for individual spin measurements of the particles, then $\sigma(\vect{a}, \vect{b})$ does {\it not}  belong to ${\cal C}(\Gamma, V)$, because the spin operators, when considered in an identical subsystem $\Gamma_i$, have a non-vanishing support
${\bf \sigma}_i(\vect{a}){\bf \sigma}_i(\vect{b})\neq0$ for $i=1,2$ in general and are not orthogonal to each other (cf.~(\ref{ospecw})).   Moreover, one can see from
\be
{\cal A}\,\sigma(\vect{a}, \vect{b})\, {\cal A}= -(\vect{a}\cdot\vect{b}) \ket{\Psi}\bra{\Psi}
\label{asaop}
\ee
that it is impossible to find an orthogonal set of operators $O_1$ and $O_2$ for ${\cal A}\,\sigma(\vect{a} , \vect{b})\,{\cal A}$ to form an observable $O \in {\cal C}(\Gamma,V)$ as (\ref{clde}) such that they depend on each of the measurement parameters $\vect{a}$ and $\vect{b}$ separately, $O_1 = O_1(\vect{a})$, $O_2 = O_2(\vect{b})$, 
because if so the resultant operator would then be of the form (\ref{Osinglet}) with the factorized eigenvalues $k_1 = k_1(\vect{a})$ and $k_2 = k_2(\vect{b})$ in contradiction with (\ref{asaop}).
This indicates that, if we restrict ourselves only to the spin degrees of freedom, 
the formal non-factorizability of the correlation $\bra{\Psi}\sigma(\vect{a},\vect{b})\ket{\Psi}=-\vect{a}\cdot\vect{b}$ (from which  the Cirelson operator is formed \cite{CHSH, Cirelson80}) for the singlet state \cite{Bellone}, despite it belongs to the separable set $S(\Gamma, V)$, can 
arise from the fact that the operator  ${\cal A}\,\sigma(\vect{a} , \vect{b})\,{\cal A}$ does not belong to ${\cal C}(\Gamma,V)$ under the identification $O_1 = {\bf \sigma}_1(\vect{a}),  \, O_2 = \sigma_2(\vect{b})$ based on which the spin correlation is considered.  
A clearer account of spin correlation for fermions, with the orthogonal structure $V$ equipped naturally to the actual experimental setup, can be gained by introducing the spatial degrees of freedom possessed by the particles, as we shall see in the last examples in this section.

To sum up, the lesson one learns here is that, even if a state is separable with respect to some $(\Gamma, V)$, the expectation value may not be separable for observables unless they belong to the corresponding class ${\cal C}(\Gamma, V)$ in which individual measurements of subsystems can be defined properly.  

\subsection{Non-transitivity}

In Sect.~III we noted that, if two states $\ket{\Psi}$ and $\ket{\Psi^\prime}$ belong to one $S(\Gamma, V) \subseteq {\cal H}_A^{\rm tot}$,  their inner product is factorizable.   It is important to notice that this property is not transitive, that is, even if the 
inner product is factorizable between $\ket{\Psi}$ and $\ket{\Psi^\prime}$, and also between $\ket{\Psi^\prime}$ and $\ket{\Psi^{\prime\prime}}$, it does not necessarily imply that it is so between  $\ket{\Psi}$ and $\ket{\Psi^{\prime\prime}}$, unless all the three belong to the same $S(\Gamma, V)$ (see Figure \ref{figone}).
To illustrate this, let us consider an $N=2$ system with $d=6$.  We have $\Gamma$ in (\ref{partex}) and consider the two sets of space,  $V=\{V_1, V_2\}$ with
\be
V_1={\rm span}\{\ket{a}\,|\, a=0,1,2\},
 \qquad
 V_2={\rm span}\{\ket{a}\,|\, a=3,4,5\},
\ee
and $V^\prime=\{V^\prime_1, V^\prime_2\}$ with
\be
V_1^\prime={\rm span}\{\ket{a}\,|\, a=0,1,3\},
 \qquad
 V_2^\prime={\rm span}\{\ket{a}\,|\, a=2,4,5\}.
\ee
In terms of normalized vectors $\alpha=(\alpha_1,\alpha_2,\alpha_3)\in\C^3$, the state vectors in $V_i[{\cal H}_i]$ and $V_i^\prime[{\cal H}_i]$ can be written as
\be
\begin{split}
\ket{\psi_1(\alpha)} =\alpha_{1}\ket{0}_1+\alpha_{2}\ket{1}_1+\alpha_{3}\ket{2}_1\in V_1[{\cal H}_1],
\qquad
\ket{\psi_2(\alpha)} =\alpha_{1}\ket{3}_2+\alpha_{2}\ket{4}_2+\alpha_{3}\ket{5}_2\in V_2[{\cal H}_2],
\end{split}
\ee
and
\be
\begin{split}
\ket{\psi^\prime_1(\alpha)} =\alpha_{1}\ket{0}_1+\alpha_{2}\ket{1}_1+\alpha_{3}\ket{3}_1\in V_1^\prime[{\cal H}_1],
\qquad
\ket{\psi^\prime_2(\alpha)} =\alpha_{1}\ket{2}_2+\alpha_{2}\ket{4}_2+\alpha_{3}\ket{5}_2\in V_2^\prime[{\cal H}_2],
\end{split}
\label{Deltavec}
\ee
respectively.
The states in $S(\Gamma, V)$ and $S(\Gamma, V^\prime)$ can thus be obtained as
\be
\ket{\Psi(\alpha, \alpha^\prime)} = A\ket{\psi_1(\alpha)}\ket{\psi_2(\alpha^\prime)}\in S(\Gamma, V),
 \qquad
\ket{\Psi^\prime(\alpha, \alpha^\prime)} = A\ket{\psi^\prime_1(\alpha)}\ket{\psi^\prime_2(\alpha^\prime)}\in S(\Gamma, V^\prime).
\ee
Observe that, although $S(\Gamma, V)$ and $S(\Gamma, V^\prime)$ are different, they are not exclusive.  Indeed, choosing
\be
\beta=(\beta_{1},\beta_{2},0)
\qquad
{\rm and}
\qquad
\beta^\prime=(0, \beta_{2}^\prime,\beta_{3}^\prime),
\ee
we find
\be
\ket{\Psi(\beta,\beta^\prime)}=\ket{\Psi^\prime(\beta,\beta^\prime)}\in S(\Gamma, V)\cap S(\Gamma, V^\prime).
\label{cap}
\ee

Let now $\bar S(\Gamma, V)$ be the complement of $S(\Gamma, V)$ in ${\cal H}_A^{\rm tot}$ satisfying
\be
S(\Gamma, V)\cap\bar S(\Gamma, V)=\emptyset
\qquad
{\rm and}
\qquad
S(\Gamma, V)\cup\bar S(\Gamma, V)={\cal H}_A^{\rm tot}.
\ee
If we choose two states $\ket{\Psi(\alpha, \alpha^\prime)}$ and $\ket{\Psi^\prime(\gamma, \gamma^\prime)}$ such that 
\be
\ket{\Psi(\alpha, \alpha^\prime)}\in S(\Gamma, V)\cap\bar S(\Gamma, V^\prime)
\qquad
{\rm and}
\qquad
\ket{\Psi^\prime(\gamma, \gamma^\prime)}\in\bar S(\Gamma, V)\cap S(\Gamma, V^\prime),
\label{defAC}
\ee
then from Lemma 2 we find that their inner products with $\ket{\Psi(\beta,\beta^\prime)}$ in  (\ref{cap}) are factorizable,
\be
\begin{split}
\braket{\Psi(\alpha, \alpha^\prime)|\Psi(\beta, \beta^\prime)}
&=\braket{\psi_1(\alpha)|\psi_1(\beta)}\braket{\psi_2(\alpha^\prime)|\psi_2(\beta^\prime)},\\
\braket{\Psi^\prime(\beta, \beta^\prime)|\Psi^\prime(\gamma, \gamma^\prime)}
&=\braket{\psi^\prime_1(\beta)|\psi^\prime_1(\gamma)}\braket{\psi^\prime_2(\beta^\prime)|\psi^\prime_2(\gamma^\prime)}.
\end{split}
\label{sepAB}
\ee
However, it is readily confirmed that 
\be
\braket{\Psi(\alpha, \alpha^\prime)|\Psi^\prime(\gamma, \gamma^\prime)}\neq\braket{\psi_1(\alpha)|\psi^\prime_1(\gamma)}\braket{\psi_2(\alpha^\prime)|\psi_2^\prime(\gamma^\prime)}
\label{insep}
\ee
in general, showing that the factorizability of the inner product is not transitive in this case.

\subsection{Cluster Decomposition Property} 

Consider an $N=2$ system with $d = 4$.  We regard the constituent Hilbert space $\mathbb{C}^4$ as $\mathbb{C}^2\otimes\mathbb{C}^2$, where the first $\mathbb{C}^2$ is spanned by the basis $\{\ket{+},\ket{-}\}$ while the second $\mathbb{C}^2$ is spanned by 
$\{\ket{L}, \ket{R}\}$.   The names of the basis vectors are motivated by the idea that set $\{\ket{+},\ket{-}\}$ refers to internal spin states,
and the set $\{\ket{L}, \ket{R}\}$ refers to localized states on the left/right in space which are separated from each other; for example, $\ket{+}\ket{L}$ implies that the particle is localized on the left with the spin state $\ket{+}$.  

As before, our $\Gamma$ is given in (\ref{partex}), and if we 
choose the set $V$ as
 \be
 V=\{V_1,V_2\},
 \qquad
 V_1={\rm span}\{\ket{a}\ket{L}\,|\, a=+,-\},
 \qquad
 V_2={\rm span}\{\ket{a}\ket{R}\,|\, a=+,-\},
 \label{Vex3}
\ee
then we  can write the states in $S(\Gamma, V)$ as
\be
 \ket{\Psi(\alpha,\alpha^\prime)}
=A\(\ket{\alpha}_1\ket{L}_1 \ket{\alpha^\prime}_2\ket{R}_2\)
=\frac{1}{\sqrt{2}}\(\ket{\alpha}_1\ket{L}_1 \ket{\alpha^\prime}_2\ket{R}_2
 -\ket{\alpha^\prime}_1\ket{R}_1 \ket{\alpha}_2\ket{L}_2\),
 \label{AStwo}
 \ee
 where $\ket{\alpha}_i$ and $\ket{\alpha^\prime}_i$, $i=1,2$, denote the spin states of the $i$-th particle.
 Namely, the space $S(\Gamma, V)$ consists of states of two particles one of which is on the left and the other on the right, and
 the vanishing overlap $\braket{L|R}=0$ expresses the remoteness between them. 
 
 With the observables $\sigma_1(\vect{a})$
 and $\sigma_2(\vect{b})$ in the spin space of the two particles defined earlier, respectively, we consider the observables
 \be
 O_1 = \sigma_1(\vect{a})\, \ket{L}_1{_1}\bra{L},
 \qquad
  O_2 = \sigma_2(\vect{b})\,  \ket{R}_2{_2}\bra{R}.
 \ee
We then construct the observable $O\in{\cal C}(\Gamma, V)$ on the two particles according to the prescription (\ref{clde}):
\be
O=A\( O_1 \otimes O_2 \)A
=A\(\sigma(\vect{a},\vect{b})\ot\ket{L}_1{_1}\bra{L}\otimes\ket{R}_2{_2}\bra{R}\)A.
\label{symob}
\ee
This corresponds clearly to the simultaneous measurement of spins on one particle on the left and on the other on the right, and for
the state $ \ket{\Psi(\alpha,\alpha^\prime)} \in S(\Gamma, V)$ in (\ref{AStwo}) its expectation value factorizes
\be
\bra{\Psi(\alpha,\alpha^\prime)}O\ket{\Psi(\alpha,\alpha^\prime)}
={_1}\bra{\alpha} \sigma_1(\vect{a}) \ket{\alpha}_1\, {_2}\bra{\alpha^\prime} \sigma_2({\bf b}) \ket{\alpha^\prime}_2
={_1}\bra{\alpha^\prime} \sigma_1(\vect{b})  \ket{\alpha^\prime}_1\, {_2}\bra{\alpha}\sigma_2(\vect{a}) \ket{\alpha}_2,
\label{CDP}
\ee
by Corollary 2.  Note that, on account of Lemma 1, the factorization is also seen for the operator $\tilde O$ in (\ref{defcalO}),
\be
\tilde O= \sigma(\vect{a},\vect{b})\ot \ket{L}_1{_1}\bra{L}\ot \ket{R}_2{_2}\bra{R} + 
\sigma(\vect{b},\vect{a})\ot\ket{R}_1{_1}\bra{R}\ot\ket{L}_2{_2}\bra{L}.
\label{symoc}
\ee
The outcome shows that, even though the states (\ref{AStwo}) are not direct product states because of the (rescaled) anti-symmetrizer $A$, 
the correlation of the two local measurements $O_1$ and $O_2$ disappears.  In other words, if the two fermions are localized and remotely separated from each other, they behave independently 
as long as their spins are concerned.
This is a prime example to illustrate the cluster decomposition property \cite{WC63, Peres93, Weinberg95} in the fermionic system, and has been mentioned in \cite{GMW02}.

However, the factorizability may break down for states which admit intrinsically long range correlations.
To see this, let us consider the antisymmetric state,
\be
\begin{split}
\ket{\Phi(\alpha,\alpha^\prime)}
=A\(\ket{\psi^+(\alpha,\alpha^\prime)}\otimes\ket{L}_1\ket{R}_2\)
=\ket{\psi^+(\alpha,\alpha^\prime)}\otimes\frac{1}{\sqrt2}\(\ket{L}_1\ket{R}_2-\ket{R}_1\ket{L}_2\),
\end{split}
\ee
where
\be
\label{alplus}
\ket{\psi^+(\alpha,\alpha^\prime)}=N(\alpha,\alpha^\prime)\(\ket{\alpha}_1\ket{\alpha^\prime}_2+\ket{\alpha^\prime}_1\ket{\alpha}_2\),
\qquad
N(\alpha,\alpha^\prime)=\frac{1}{\sqrt{2(1+|{}_1\!\braket{\alpha|\alpha^\prime}\!{}_1|^2)}},
\ee
where we used ${}_1\braket{\alpha^\prime|\alpha}{}_1={}_2\braket{\alpha^\prime|\alpha}{}_2$ derived from the isomorphism between the constituents.
Clearly, since the state $\ket{\Phi(\alpha,\alpha^\prime)}$ is not an anti-symmetrized state of a direct product of {\it constituent} states,  we have $\ket{\Phi(\alpha,\alpha^\prime)}\notin S(\Gamma, V)$ for $S(\Gamma, V)$ given in (\ref{Vex3}). 
Such a state with orthogonality in the internal degree of freedom $\braket{\alpha|\alpha^\prime}=0$ can be found as a 1s2s state of electrons in Helium atom in the first order approximation of  perturbation theory, where 
the state for the spatial degrees of freedom of $\ket{\Phi(\alpha,\alpha^\prime)}$ results from the anti-symmetrization of the product state $\ket{L}_1\ket{R}_2$.   Because of the factorization between the spin part and the spatial part, we have 
the density matrix for the spin part in the pure state form,
\be
\sum_{i,j=L,R}{_1}\bra{i}{_2}\bra{j}\(\ket{\Phi(\alpha,\alpha^\prime)}\bra{\Phi(\alpha,\alpha^\prime)}\)\ket{i}_1\ket{j}_2=\ket{\psi^+(\alpha,\alpha^\prime)}\bra{\psi^+(\alpha,\alpha^\prime)}.
\ee
The expectation value of the observable $O$ in (\ref{symob}) with $\ket{\Psi(\alpha,\alpha^\prime)}$ then becomes
\be
\begin{split}
\bra{\Phi(\alpha,\alpha^\prime)}O\ket{\Phi(\alpha,\alpha^\prime)}
=N(\alpha,\alpha^\prime)^2&\Bigl({_1}\bra{\alpha}\sigma_1(\vect{a}) \ket{\alpha}_1\, {_2}\bra{\alpha^\prime} \sigma_2(\vect{b})\ket{\alpha^\prime}_2
+{_1}\bra{\alpha^\prime} \sigma_1(\vect{a} \ket{\alpha^\prime}_1\,{_2}\bra{\alpha}\sigma_2(\vect{b}) \ket{\alpha}_2\\
&+{_1}\bra{\alpha}\sigma_1(\vect{a}) \ket{\alpha^\prime}_1\, {_2}\bra{\alpha^\prime} \sigma_2(\vect{b})\ket{\alpha}_2
+{_1}\bra{\alpha^\prime} \sigma_1(\vect{a}) \ket{\alpha}_1\,{_2}\bra{\alpha}\sigma_2(\vect{b}) \ket{\alpha^\prime}_2\Bigr),
\end{split}
\ee
which is no longer factorized and gives rise to correlations between the constituents in general. Also, utilizing the relations ${_1}\bra{\alpha}\sigma_1(\vect{a}) \ket{\alpha}_1={_2}\bra{\alpha}\sigma_2(\vect{a}) \ket{\alpha}_2$ and so on which come from the isomorphism between constituents, one can confirm Proposition 2 by direct calculation of the expectation value of $\tilde{O}$ in (\ref{symoc}) with respect to $\ket{\Phi(\alpha,\alpha^\prime)}$.

To extend the above examples to a system with more than three subsystems, let us consider an $N = 4$ system with $d=6$. The constituent Hilbert space $\mathbb{C}^6$ can be thought of as the direct product of $\mathbb{C}^2$ spanned by the basis $\{\ket{+},\ket{-}\}$ and $\mathbb{C}^3$ spanned by the basis $\{\ket{L},\ket{C},\ket{R}\}$.  As before, 
we regard $\{\ket{+},\ket{-}\}$ as internal spin states and $\{\ket{L},\ket{C},\ket{R}\}$ as spatial states localized on the left, the center and the right, respectively, which are separated from each other and share no support in common.   We choose the partition $\Gamma$ and the set of subspaces $V$ as
\be
\Gamma = \{\Gamma_1,\Gamma_2,\Gamma_3\},
\qquad
\Gamma_1 = \{1\},
\qquad
\Gamma_2 =\{2\},
\qquad
\Gamma_3=\{3,4\},
\label{quatro}
\ee
and
$V=\{V_1,V_2,V_3\}$
with
\be
V_1=\mathrm{span}\{\ket{a}\ket{L}|a=+,-\},
\qquad 
V_2=\mathrm{span}\{\ket{a}\ket{C}|a=+,-\},
\qquad
V_3=\mathrm{span}\{\ket{a}\ket{R}|a=+,-\}.
\ee
The states in $S(\Gamma,V)$ can be written as
\be
\ket{\Psi(\alpha,\beta,\gamma)} = A\(\ket{\alpha}_1\ket{L}_1\ket{\beta}_2\ket{C}_2\ket{\gamma}_{34}\),
\ee
where 
\be
\ket{\gamma}_{34} = 
\ket{{\rm Singlet}}_{34}\ot\ket{R}_3\ket{R}_4\in W(\Gamma_3,V_3)
\ee
is the state of subsystem $\Gamma_3$ representing the spin singlet state localized in the right of the space.  From the prescription in (\ref{clde}), the observables in ${\cal C}(\Gamma, V)$ read
\be
O =  A\(O_1\otimes O_2\otimes O_{3}\)A,
\ee
with
\be
\begin{split}
&O_1 = \sigma_1(\vect{a})\ket{L}_{11}\bra{L},
\qquad
O_2 = \sigma_2(\vect{b})\ket{C}_{22}\bra{C},
 \qquad
 O_3 =\ket{\gamma}_{34\,\,34}\bra{\gamma}.
\end{split}
\ee
Under the state $\ket{\Psi(\alpha,\beta,\gamma)}\in S(\Gamma,V)$, the observable $O$ has the expectation value  which is factorized as
\be
\bra{\Psi(\alpha,\beta,\gamma)}O\ket{\Psi(\alpha,\beta,\gamma)}={}_1\bra{\alpha}\sigma_1(\vect{a})\ket{\alpha}_{1\;2}\bra{\beta}\sigma_2(\vect{b})\ket{\beta}_{2\;34}\bra{\gamma}O_3\ket{\gamma}_{34}.
\ee

On the other hand, the factorizability does not hold for states which do not belong to $S(\Gamma,V)$. For example, we have the state
\be
\ket{\Phi} = A\(\ket{\psi^+(\alpha,\alpha^\prime)}\ot\ket{L}_1\ket{C}_2\ot\ket{+}_3\ket{R}_3\ket{-}_4\ket{R}_4\)
\ee
which does not belong to $S(\Gamma,V)$ due to the entangled spin part 
$\ket{\psi^+(\alpha,\alpha^\prime)}$ in (\ref{alplus}). The expectation value of the state is given by
\be
\begin{split}
\bra{\Phi}O\ket{\Phi}=\frac{N(\alpha,\alpha^\prime)^2}{2}&\Bigl({_1}\bra{\alpha}\sigma_1(\vect{a}) \ket{\alpha}_1\, {_2}\bra{\alpha^\prime} \sigma_2(\vect{b})\ket{\alpha^\prime}_2
+{_1}\bra{\alpha^\prime} \sigma_1(\vect{a}) \ket{\alpha^\prime}_1\,{_2}\bra{\alpha}\sigma_2(\vect{b}) \ket{\alpha}_2\\
&+{_1}\bra{\alpha}\sigma_1(\vect{a}) \ket{\alpha^\prime}_1\, {_2}\bra{\alpha^\prime} \sigma_2(\vect{b})\ket{\alpha}_2
+{_1}\bra{\alpha^\prime} \sigma_1(\vect{a}) \ket{\alpha}_1\,{_2}\bra{\alpha}\sigma_2(\vect{b}) \ket{\alpha^\prime}_2\Bigr)\cdot
{}_{34}\bra{\gamma}O_3\ket{\gamma}_{34},
\end{split}
\ee
which is not factorized with respect to the partition $\Gamma$ in (\ref{quatro}).

\section{Conclusions and Discussions}
 \setcounter{equation}{0}  

In this paper, we have presented a general and coherent framework to examine the separability of states for $N$-particle fermionic systems under arbitrary partitions. The system we have considered consists of $N$ constituent systems of fermions each possessing the $d$-dimensional Hilbert space $\mathbb{C}^d$, and the decomposition of the system into subsystems is specified by an arbitrary partition $\Gamma$ which provides a grouping of the constituents in the system. With the help of an orthogonal structure $V$ defined in the constituent space $\mathbb{C}^d$, we have found in the entire Hilbert space ${\cal H}_A^{\rm tot}$ a set $S(\Gamma, V)$ of states for which the inner products are factorizable according to the partition $\Gamma$. Moreover, we have seen that there exists a class of observables ${\cal C}(\Gamma, V)$ on ${\cal H}_A^{\rm tot}$ whose expectation values (and weak values) are factorizable for all states in $S(\Gamma, V)$ (Theorem 1 and Corollary 1).   
In the $N=2$ example mentioned in Sect.~II, we introduced the additional 
(spatial) degrees of freedom to realize an orthogonal structure $V$ in order to recover the behavior of distinguishable particles for fermionic states.  
However, our argument shows that the same goal is achieved without invoking such additional 
degrees of freedom, if the constituent system has a sufficiently large dimension $d$ to sustain fermionic states within the internal degrees of freedom alone, {\it i.e.}, if $d \ge N$.  

Our separability criterion (Definition 1) for arbitrary $s$ partitions,
given by a set of projection operators, is an extension (with a slight
modification in form) of the criterion proposed earlier in \cite{GMW02} for
the case $s=2$ and $s=N$ in the context of \lq complete sets of properties\rq.
Our Theorem 2 shows that our criterion is fulfilled by a state if and only if it
belongs to $S(\Gamma, V)$ for some $V$ under a given $\Gamma$, that is,
$S(\Gamma, V)$ furnishes
the entire set of states which are separable with respect to the partition
$\Gamma$ and the orthogonal structure $V$, generalizing the result obtained
in \cite{GMW02} for the fermionic case.

A salient feature of our criterion, or of multi-partite separability in
general,  is the explicit dependence on the orthogonal structure $V$.  Most
notably, it implies that a state which is separable under $V$ may no longer
be separable under different $V$ even with the same partition $\Gamma$.
In a sense this is natural because, on physical grounds, the choice of $V$
corresponds to the available experimental setups prepared for measurement,
where the orthogonal structure $V$ ensures that the measurement yields
unambiguous outputs concerning the separation of the subsystems.
The class of observables ${\cal C}(\Gamma, V)$ for which the state possesses
factorizable expectation values will then be determined by the setups used.
Indeed, the example of spin measurement for remotely separated fermions
given in Sect.~V shows that localized measurement setups have the class
${\cal C}(\Gamma, V)$ of observables for which no correlations arise with
the set $S(\Gamma, V)$ of anti-symmetrized product states
in line with the cluster decomposition property.  However, the same class of
observables admit
inseparable states which are entangled in the spin part and the spatial
part, separately.  
In addition, we have mentioned the non-uniqueness in the assignment of
$S(\Gamma, V)$ to a separable state, as well as
the non-transitivity of the separability, as important consequences of  the
$V$-dependence of our separability criterion.
These considerations on the implications of the orthogonal structure $V$
suggest that it might be actually more appropriate to ask the separability
of the system based on
the experimental setups qualified as measurement for separation, not on the
states as we do. 

In closing we remark that our framework can in principle be extended to the
bosonic (totally symmetric) states by use of the symmetrizer instead of the
anti-symmetrizer, coupled with some extra consideration pertinent to the
bosonic systems.   It should also be important to study how to quantify the
amount of inseparability, namely, to find suitable entanglement measures for
$N$-particle indistinguishable particles relevant to the separability under
arbitrary partitions.   The family of measures proposed recently by us in
\cite{IST09} which are specifically designed to examine the arbitrary
separability (for distinguishable particles) may be convenient for that
purpose.

\section*{ACKNOWLEDGEMENTS}
The authors thank Prof. Akira Shimizu for calling our attentions to the topic.  
T. I. is supported by \lq Open Research Center\rq~Project for Private Universities; matching fund subsidy, T. S. is supported by Global COE Program \lq the Physical Sciences Frontier\rq, and I. T. is supported by the Grant-in-Aid for Scientific Research (C), No.~20540391-H21, all of MEXT, Japan.



\begin{thebibliography}{99}
%
\bibitem{WC63}
E. H. Wichmann and J. H. Crichton,
{\rm Cluster Decomposition Properties of the S Matrix}, 
\pr{132}{1963}{2788}.
%
\bibitem{Peres93}
Asher Peres,
{\em Quantum Theory: Concepts and Methods}, 
Kluwer, Dordrecht, 1993.
%
\bibitem{Weinberg95}
Steven Weinberg,
{\em Quantum Theory of Fields, Volume I, Foudations}, 
Cambridge university press, Cambridge, 1995.
%
\bibitem{SCKLL01}
John Schliemann, J. Ignacio Cirac, Marek Ku\'s, Maciej Lewenstein and Daniel Loss,
{\rm Quantum correlations in two-fermion systems}, 
\prA{64}{2001}{022303}.
%
\bibitem{LZLL01}
Y. S. Li, B. Zeng, X. S. Liu, and G. L. Long,
{\rm Entanglement in a two-identical-particle system}, 
\prA{64}{2001}{054302}.
%
\bibitem{ESBL02}
K. Eckert, J. Schliemann, D. Bru{\ss} and M. Lewenstein,
{\rm Quantum correlations in systems of indistinguishable particles}, 
\ann{299}{2002}{88-127}.
%
\bibitem{PY01}
R. Paskauskas and L. You,
{\rm Quantum correlations in two-boson wave functions}, 
\prA{64}{2001}{042310}.
%
\bibitem{WV03}
H. M. Wiseman and J. A. Vaccaro,
{\rm Entanglement of Indistinguishable Particles Shared between Two Parties}, 
\prl{91}{2003}{097902}.
%
\bibitem{GM04}
GianCarlo Ghirardi and Luca Marinatto, 
{\rm General criterion for the entanglement of two indistinguishable particles}, 
\prA{70}{2004}{012109}.
%
\bibitem{GMW02}
G.-C. Ghirardi, L. Marinatto and T. Weber, 
{\rm Entanglement and Properties of Composite Quantum Systems: a Conceptual and Mathematical Analysis}, 
{\it J. Stat. Phys.} {\bf 108} (2002) 49.
%
\bibitem{GM03}
GianCarlo Ghirardi and Luca Marinatto, 
{\rm Entanglement and properties}, 
{\it Fortschr. Phys.} {\bf 51} (2003) 379.
%
\bibitem{NC00}
Michael A. Nielsen and Isaac L. Chuang,
{\em Quantum Computaion and Quantum Information}, 
Cambridge university press, Cambridge, 2000.
%
\bibitem{AAV88}
Yakir Aharonov, David Z. Albert, and Lev Vaidman,
{\rm How the Result of a Measurement of a Component of the Spin of a Spin-$\frac{1}{2}$ Particle Can Turn Out to be $100$}, 
\prl{60}{1988}{1351}.
%
\bibitem{AR05}
Yakir Aharonov and Daniel Rohrich,
{\em Quantum Paradoxes}, 
WILEY-VCH, Weinheim, 2005.
%
\bibitem{YYKI09}
Kazuhiro Yokota, Takashi Yamamoto, Masato Koashi and Nobuyuki Imoto,
{\rm Direct observation of Hardy\rq s paradox by joint weak measurement with an entangled photon pair}, 
\njp{11}{2009}{033011}.
%
\bibitem{Bellone}
J. S. Bell, On the Einstein-Podolsky-Rosen Paradox, {\it Physics} {\bf 1} (1964) 195.
%
\bibitem{CHSH}
{J. F. Clauser, M. A. Horne, A. Shimony and R. A. Holt, 
Proposed Experiment to Test Local Hidden-Variable Theories,
\prl{23}{1969}{880}.}
%
\bibitem{ADR82}
{A. Aspect, J. Dalibard and G. Roger,
Experimental Test of Bell's Inequalities Using Time- Varying Analyzers,
\prl{49}{1982}{1804}.}
%
\bibitem{WJSWZ98}
{G. Weihs, T. Jennewein, C. Simon, H. Weinfurter and A. Zeilinger,
Violation of Bell's Inequality under Strict Einstein Locality Conditions,
\prl{81}{1998}{5039}.}
%
\bibitem{Belltwo}
J. S. Bell, Introduction to the Hidden-Variable Question, in
{\em Foundation of Quantum Mechanics}, Proc. Int. Sch. Phys. \lq Enrico Fermi\rq,  ed. B.~d'Espagnat,
Academic, New York, 1971.
%
\bibitem{Cirelson80}
{B. S. Cirel\rq son, 
Quantum generalizations of Bell's inequality,
\lmp{4}{1980}{93}.}
%
\bibitem{IST09}
Tsubasa Ichikawa, Toshihiko Sasaki and Izumi Tsutsui, 
{\rm Entanglement measures for intermediate separability of quantum states}, 
\prA{79}{2009}{052307}.
%
%
\end{thebibliography}
\end{document}